\colorlet{kw}{blue}
\definecolor{comments}{rgb}{0,0.5,0}
\newcommand{\algkeyword}[1]{\textcolor{kw}{\textbf{#1}}}
\algrenewcommand\algorithmicfunction{\algkeyword{function}}
\algrenewcommand\algorithmicwhile{\algkeyword{while}}
\algrenewcommand\algorithmicfor{\algkeyword{for}}
\algrenewcommand\algorithmicdo{\algkeyword{do}}
\algrenewcommand\algorithmicif{\algkeyword{if}}
\algrenewcommand\algorithmicthen{\algkeyword{then}}
\algrenewcommand\algorithmicelse{\algkeyword{else}}
\algrenewcommand\algorithmicend{\algkeyword{end}}
\algrenewcommand\algorithmicreturn{\algkeyword{return}}
\begin{document}

\title{Implementing Semantic Join Operators Efficiently}

\author{Immanuel Trummer}
\affiliation{%
  \institution{Cornell University}
  \city{Ithaca}
  \state{USA}
}
\email{itrummer@cornell.edu}


\begin{abstract}
Semantic query processing engines often support semantic joins, enabling users to match rows that satisfy conditions specified in natural language. Such join conditions can be evaluated using large language models (LLMs) that solve novel tasks without task-specific training.

Currently, many semantic query processing engines implement semantic joins via nested loops, invoking the LLM to evaluate the join condition on row pairs. Instead, this paper proposes a novel algorithm, inspired by the block nested loops join operator implementation in traditional database systems. The proposed algorithm integrates batches of rows from both input tables into a single prompt. The goal of the LLM invocation is to identify all matching row pairs in the current input. The paper introduces formulas that can be used to optimize the size of the row batches, taking into account constraints on the size of the LLM context window (limiting both input and output size). An adaptive variant of the proposed algorithm refers to cases in which the size of the output is difficult to estimate. A formal analysis of asymptotic processing costs, as well as empirical results, demonstrates that the proposed approach reduces costs significantly and performs well compared to join implementations used by recent semantic query processing engines.
\end{abstract}

\maketitle

\section{Introduction}
\label{sec:introduction}

Several recent systems~\cite{Trummer2022, Narayan2022, Arora, Kayali2023, Saeed2023, Trummer2022b, Chen2023, Thorne2021} expand SQL by introducing semantic operators. Those operators, including, for instance, semantic filters and semantic sort operators, are configured via natural language instructions and evaluated by large language models (LLMs). Compared to traditional relational operators, the per-byte processing overheads of such operators are typically higher by many orders of magnitude. This means, in the context of semantic queries, processing overheads are typically dominated by overheads due to semantic operators. This makes it crucial to make those operators as efficient as possible.

This paper focuses on a semantic version of a classical relational operator: the relational join. Semantic joins, as defined by systems like LOTUS~\cite{Patel2025}, enable users to define the join condition in natural language. Note that this paper does not explicitly focus on equality joins. Instead, it focuses on general theta-joins~\cite{Mishra1992} with natural language predicates. Such join operators are useful in the following example scenarios.



\begin{example}
    To investigate a large corporation\footnote{This example is motivated by the investigation of the Enron corporation. In the context of this investigation, prosecutors had access to and analyzed some of over 500,000 emails from more than 150 employees.}, prosecutors plan to analyze a large collection of emails. The goal is to compare emails to statements made by executives of that company. Of particular interest are instances where an email contradicts statements made by the defendants. This can be modeled as a join between two data sets, one containing statements and the other one containing emails. The join predicate can be formulated in natural language as ``pairs of documents that contradict each other.''
\end{example}

\begin{example}
    A Website enables users to enter ads for used goods as free text, as well as free text descriptions of items they are searching for. Free text descriptions may contain detailed descriptions of various aspects of the items offered (or desired). For instance, a user looking for a new table may include constraints and preferences with regard to material, color, size, or state (e.g., ``no coffee stains''). The Website wants to introduce a feature that supports users in matching ads to requests. This can be formulated as a join between two tables containing ads and searches. The join predicate can be expressed in natural language as ``pairs of ads matching requests.''
\end{example}

Both tasks require natural language understanding as well as, potentially, some common-sense knowledge. E.g., determining that ``I met Chris in Houston in the afternoon'' contradicts ``I saw Chris in Berlin at 1 PM'' (assuming both statements refer to the same day and person) requires commonsense knowledge in terms of the minimal duration of a flight between the two locations. Fortunately, language models like OpenAI's GPT models combine both capabilities and can be used, in principle, to evaluate the join predicates in the two aforementioned scenarios (as well as many others).

Perhaps the most natural way to use language models in the aforementioned scenarios is to compare pairs of entries from the two input collections. Iterating over all pairs of items from the two inputs, the language models can be tasked to compare two specific items with regard to the user-specified join condition. Essentially, this corresponds to a tuple nested loop join with the language model as predicate evaluation function (a corresponding approach is described in more detail in Section~\ref{sec:TupleJoin}). To comply with database terminology, we will from now on refer to the two input collections as ``tables'' and to their elements as ``tuples'' (even if the items do not actually have to correspond to relational tuples).


The problem with the aforementioned approach is efficiency. Using LLMs is expensive. Providers such as OpenAI charge per the amount of text (measured in ``tokens'', the atomic unit at which language models represent text) read and generated. Pairwise comparisons between a moderate number of 10,000 ads and searches with 100 tokens each (which corresponds to about a paragraph of text) using GPT-4 would cost about 600,000 dollars according to current rates\footnote{At the time of writing, OpenAI charges 3 cents per 1,000 tokens read when using GPT-4.}. To make such approaches practical, it is crucial to reduce processing fees.

This paper shows that techniques from traditional database join algorithms~\cite{Mishra1992, Ramakrishnan2002} can be adapted to improve efficiency for theta-joins, executed by language models, by many orders of magnitude. The core insight behind the algorithms presented in the following sections is the fact that language models can be used much more flexibly than pure predicate evaluation functions. More precisely, it is possible to task language models with finding matching pairs directly, within a set of tuples provided as part of the input. All it takes is a corresponding instruction in natural language, provided as part of the input as well. However, this does not mean that it is possible to use language models to perform the entire join operation in a single step. Language models come with strict constraints on the number of tokens that can be processed (i.e., read or generated) during a single invocation of the model. To scale the approach to larger data sets, it is necessary to decompose corresponding joins into multiple invocations of the LLM, each invocation only referring to a limited subset of the input data.

The approach described above resembles a block nested loops join. Similar to the classical join algorithm, the goal is to deal with limits on the data size that can be stored in the higher levels of the memory hierarchy, i.e., closer to the processor. In this scenario, the language model is the ``processor'' and size limits are imposed by the per-invocation token limit, intrinsic to the language model used. Despite those similarities, language models come with a very specific set of constraints, rendering a straightforward adaption of the traditional join operators inefficient.

For instance, for classical block nested loops joins, it is assumed that an output buffer of minimal size suffices. This assumption is justified if the output buffer can be repeatedly flushed to disk and re-filled during join processing, without losing the content of input buffers. However, in the case of language models, reaching the token limit when generating output tokens means that the request terminates. In that case, all input sent to the language model is lost and must be resent in a follow-up request (which incurs costs for reading input tokens again). This means, rather than generating join results gradually, we need to ensure that, for each invocation of the language model, the complete result fits within the token limit, determined by the token limit of the model, minus the number of tokens used up for the input data (and join task description).

The fact that a hard bound limits the combined size of the task description, (partial) join inputs, and join outputs is unique to the context of language models and requires careful planning. Sending too much input data within a single request is risky as the complete join result may exceed the token limit, possibly requiring redoing the corresponding task. On the other hand, sending less input than possible is inefficient. It means that the language model is invoked more often than necessary. In the worst case, this approach reduces, essentially, to the tuple join, which uses language models to analyze single pairs of input tuples in each invocation.

The following sections develop, first of all, a custom cost model, calculating join processing costs as a function of input data properties (e.g., the number and average token size of the input tuples), language model properties (e.g., the cost per token read or generated and the maximal number of tokens processed per invocation), as well as of the batch sizes chosen for the two input tables (i.e., the number of tuples from each of the two input tables, sent per model invocation). The cost model focuses on fees paid for using the language model, typically the dominant cost factor when using LLMs such as GPT-4. Whereas data and model properties cannot be influenced, the number of tuples sent per model invocation can be chosen. Therefore, along with the cost model, the following sections derive formulas for calculating the optimal batch size for both input tables, given values for all relevant parameters describing data and model properties.

The cost model, and therefore the formulas for calculating optimal batch sizes, rely crucially on the selectivity of the join predicate. This selectivity determines how many join output tuples are generated, in expectation, per model invocation. Therefore, the selectivity determines how much ``space,'' i.e., how many tokens, need to be reserved for generating output as opposed to storing input tuples. The lower the selectivity, the fewer tokens need to be reserved for writing output. This means we can send more input tuples in each invocation of the language model, reducing the number of LLM invocations required to generate a complete join result (and therefore, as we will see in the following sections, the expected join costs).

As join predicates are formulated in natural language, it is not possible to apply standard methods to estimate their selectivity (e.g., based on histograms or other data statistics). However, it turns out that knowing the precise selectivity in advance is ultimately not necessary. This paper presents an adaptive join algorithm that automatically adapts join selectivity estimates, along with the associated choices for batch sizes. Starting from an optimistic selectivity estimate, i.e., an estimate that is possibly much smaller than the actual selectivity, the adaptive join algorithm starts by sending batches of tuples that may be too large to be processed in a single model invocation (since the amount of output generated exceeds the token limit). By a suitable design of the task instructions for the LLM, cases in which an incomplete result is generated due to the token limit can be recognized (we will use the term ``overflow'' in such cases). In the case of an overflow, the adaptive join algorithm updates the selectivity estimate by increasing it by a constant factor. Eventually, the selectivity reaches an estimate that is equal to or higher than the actual selectivity. This means that sending tuples does not result in an overflow anymore.

While it is clear that the adaptive join algorithm will eventually find a selectivity estimate that avoids overflows, it is not clear, a-priori, that this approach results in interesting performance properties. However, formal analysis shows that the adaptive join algorithm reaches near-optimal join processing costs under moderately simplifying assumptions.


The experiments, using OpenAI's GPT-4 model, demonstrate that batching tuples in join prompts leads to a dramatic reduction in semantic join costs. Specifically, the proposed join algorithms reduce processing overheads significantly compared to join algorithms used in LOTUS~\cite{Patel2025}, a recently proposed semantic query processing engine. Comparing different join implementations proposed in this paper, it turns out that the block join works best if the selectivity of the join predicate is known. On the other hand, the adaptive version achieves nearly the same performance without requiring a selectivity value beforehand. A simple approach exploiting embedding vectors to match row pairs during the join works best in scenarios where the join condition is semantically close to an equality join. In scenarios where the goal is to match items that are complementary (e.g., matching contradicting statements), the result quality may, however, degrade severely. 



In summary, the original scientific contributions in this paper are the following:

\begin{itemize}
    \item The paper introduces multiple algorithms implementing semantic joins with arbitrary (i.e., not necessarily equality) join conditions, described in natural language.
    \item The paper analyzes the cost of the proposed algorithms in terms of token consumption, proposing formulas to tune these implementations for optimal performance.
    \item The paper presents experiments, evaluating the proposed algorithms in several scenarios, comparing to multiple baseline algorithms (some of which are currently used in semantic query processing engines).
\end{itemize}

The remainder of this paper is organized as follows. Section~\ref{sec:Model} introduces the problem model and related terminology. Section~\ref{sec:TupleJoin} describes a simple join algorithm that uses language models for pairwise tuple comparisons. Section~\ref{sec:BlockJoin} describes a join operator that exploits LLMs for finding matching pairs between tuple batches. Section~\ref{sec:Optimization} shows how to optimize batch sizes for that join operator if the selectivity of the join predicate is known. Section~\ref{sec:Adaptive} presents an adaptive join operator that automatically updates selectivity estimates while achieving near-optimal performance. Section~\ref{sec:experiments} reports on experiments, comparing all join operators in different scenarios and according to different metrics. Finally, Section~\ref{sec:Related} contrasts the work presented in this paper with prior work.
\section{Problem Model}
\label{sec:Model}

This paper addresses the following problem.

\begin{definition}[Semantic Join with Natural Language Predicates]
    Given two tables $R_1$ and $R_2$, together with a join predicate $j$, expressed as free text in natural language, find all pairs $R\subseteq R_1\times R_2$ that satisfy predicate $j$.
\end{definition}

Tuples may represent text documents or a textual representation of structured records. The aforementioned problem can be solved by LLMs.

\begin{definition}[Large Language Model]
    A large language model processes arbitrary tasks, described in natural language in the prompt (the input text sent to the model). Processing fees are proportional to the number of tokens (the atomic unit at which text is represented) read and generated (with possibly different cost factors for tokens read and generated). The sum of tokens read and generated per model invocation is upper-bounded by a model-specific constant.
\end{definition}

\begin{table}[t]
    \centering
    \caption{Symbols and their semantics.\label{tab:symbols}}
    \begin{tabular}{ll}
    \toprule[1pt]
    \textbf{Symbol} & \textbf{Semantics} \\
    \midrule[1pt]
        $r_i$ & Number of rows in table $i$\\
        $b_i$ & Number of rows per batch for table $i$ \\
        $b_i^*(\sigma)$ & Optimal batch size for table $i$\\
        $s_i$ & Token size per entry in table $i$ \\
        $\sigma$ & Selectivity of join condition \\
        $g$ & Relative cost of generating tokens \\
        $p$ & Size of task description with predicate \\
        $t$ & Token threshold per LLM invocation \\
        $c(b_1,b_2)$ & Total processing costs \\
        $c^*(b_1)$ & Cost for given $b_1$ and optimal choice of $b_2$ \\
        $o(e,\sigma)$ & Join cost with selectivity $\sigma$ when optimizing for $e$ \\
    \bottomrule[1pt]
    \end{tabular}
\end{table}

Table~\ref{tab:symbols} summarizes all symbols introduced in the next sections. 
\section{Tuple Nested Loops Join}
\label{sec:TupleJoin}

This section introduces a variant of the tuple nested loops join, as well as an associated cost model.

\subsection{Algorithm}
\label{sub:algorithm}

\begin{algorithm}[t]
\begin{algorithmic}[1]
    \State \Comment{Perform tuple join between relations $R_1$ and $R_2$,}
    \State \Comment{using join condition $j$.}
    \Function{BlockJoin}{$R_1,R_2,j$}
    \State \Comment{Initialize result set}
    \State $R\gets\emptyset$
    \State \Comment{Iterate over tuple pairs}
    \For{$t_1\in R_1$}
    \For{$t_2\in R_2$}
    \State \Comment{Create prompt for LLM}
    \State $P\gets$\Call{TuplePrompt}{$t_1,t_2,j$}
    \State \Comment{Ask LLM if join condition satisfied}
    \State $A\gets$\Call{InvokeLLM}{$P$}
    \State \Comment{Add result tuple if answer is positive}
    \If{A=``Yes''}
    \State $R\gets R\cup\{\langle t_1,t_2\rangle\}$
    \EndIf
    \EndFor
    \EndFor
    \State \Comment{Return join result}
    \State \Return{$R$}
    \EndFunction
\end{algorithmic}
    \caption{Tuple nested loops join algorithm for semantic joins, executed via large language models.\label{alg:TupleJoin}}
\end{algorithm}

Algorithm~\ref{alg:TupleJoin} shows the tuple nested loops join algorithm, adapted to use a large language model to evaluate join conditions. The input to Algorithm~\ref{alg:TupleJoin} are the two tables, $R_1$ and $R_2$, as well as the join condition, $j$. The join condition is formulated in natural language and expresses the condition for a match between two tuples. As the classical tuple nested loops join, Algorithm~\ref{alg:TupleJoin} iterates over all combinations of tuples from the input tables. The particularity of Algorithm~\ref{alg:TupleJoin} lies in the way the join condition is evaluated.

\begin{figure}
    \centering
    \begin{verbatim}
    Is the following true ("Yes"/"No"): [j]?
    Text 1: [t1]
    Text 2: [t2]
    Answer:
    \end{verbatim}
    \caption{Prompt template used for tuple nested loops join (instantiated by Function~\textproc{TuplePrompt} in pseudo-code).}
    \label{fig:TupleTemplate}
\end{figure}

To evaluate a join condition, Algorithm~\ref{alg:TupleJoin} performs three steps. First, it generates a prompt, instructing the language model to compare the two current tuples. Second, it invokes a language model with that prompt to execute that comparison. Finally, it interprets the text answer by the language model, adding the tuple combination to the result set if the two input tuples match.

Figure~\ref{fig:TupleTemplate} shows the template used for generating prompts. It contains several placeholders, marked by square brackets. Function~\textproc{TuplePrompt}, used in Algorithm~\ref{alg:TupleJoin}, instantiates this template by substituting placeholders with values from the input parameters. The start of the prompt template describes the task to the language model (answering the question of whether or not the following condition holds), as well as the desired output format (i.e., either a ``Yes'' or a ``No''). The instructions contain a placeholder for the join condition, \verb|[j]|, describing the conditions for a match. After that, the prompt contains the data, i.e., the two tuples to compare (represented via placeholders \verb|[t1]| and \verb|[t2]|). The prompt concludes with a request for an answer, indicating to the language model that all relevant information for the task has been conveyed.

Function~\textproc{InvokeLLM} submits prompts to a language model (e.g., GPT-4) and returns the generated answer. In principle, the generated answer could be arbitrary text. However, as the prompt specifies an expected output format, the answer should be either ``Yes'' or ``No'' in most cases. Any valid answer uses one single token. For that reason, the implementation of \textproc{InvokeLLM} configures the language model to generate at most one single output token (thereby avoiding rare but costly cases in which the language model might generate a longer text as a reply, misunderstanding the instructions).

\subsection{Cost Model}
\label{sub:tuplemodel}

The following cost model estimates (monetary) processing costs as a function of input properties. Parameters $r_1$ and $r_2$ denote the number of rows in the two input tables. Parameters $s_1$ and $s_2$ denote the (average) sizes of a tuple in the two input tables, measured in terms of the number of tokens (since Cloud providers of language models such as OpenAI charge per token processed). Also, $p$ denotes the number of tokens used for the part of the prompt that remains static across different loop iterations (i.e., all text except for the compared tuples). In some cases, generating output is more expensive than generating input. Parameter $g$ denotes the relative cost overhead of generating tokens, compared to reading tokens.

\begin{lemma}
    Comparing two input tuples incurs cost $p+s_1+s_2+g$.\label{lm:TupleComparisonCost}
\end{lemma}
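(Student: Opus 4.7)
The plan is to decompose the cost of a single tuple comparison into its two components: tokens read (the prompt sent to the LLM) and tokens generated (the LLM's answer), then account for each using the parameters defined in Section~\ref{sub:tuplemodel}.

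First, I would analyze the input prompt. Inspecting the template in Figure~\ref{fig:TupleTemplate} and the definition of Function~\textproc{TuplePrompt}, the prompt consists of three substrings: the static portion (task instructions, join condition \verb|[j]|, and formatting markers), plus the two tuple placeholders \verb|[t1]| and \verb|[t2]| substituted with the current input tuples. By the definition of $p$ as the number of tokens in the static part of the prompt, and $s_1,s_2$ as the average tuple sizes in the two tables, the total number of input tokens is $p+s_1+s_2$. Since the per-token reading cost is normalized to one, this contributes $p+s_1+s_2$ to the cost.

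Next, I would account for the output. The discussion of \textproc{InvokeLLM} explicitly states that the model is configured to generate at most one output token, corresponding to a ``Yes'' or ``No'' answer. By the definition of $g$ as the relative cost overhead of generating a token compared to reading one, generating this single token contributes $g$ to the cost. Summing the two components yields the claimed total of $p+s_1+s_2+g$.

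There is no real obstacle here: the lemma is essentially a direct bookkeeping consequence of the cost model and the algorithm's design choice to cap the output at one token. The only thing to be careful about is making explicit the (mild) assumption that the model actually obeys the one-token output restriction, which is enforced by the decoding configuration of \textproc{InvokeLLM} rather than derived from the model's behavior; this justifies treating the output cost as exactly $g$ instead of an upper bound.
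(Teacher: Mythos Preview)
Your proposal is correct and follows essentially the same approach as the paper: decompose the per-comparison cost into $p+s_1+s_2$ tokens read for the static prompt plus the two tuples, and one generated output token contributing cost $g$. The paper's proof is just a terser version of what you wrote, without the explicit remark about the one-token output configuration.
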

\begin{proof}
    Tuple-independent parts of the prompt account for $p$ tokens read. In addition, the information about the two input tuples, i.e., $s_1+s_2$ tokens must be read. Finally, one output token (``Yes'' or ``No'') is generated in each iteration with cost $g$.
\end{proof}

Total join processing costs follow immediately.

\begin{corollary}
    Join processing costs are $r_1\cdot r_2\cdot (p+s_1+s_2+g)$.
\end{corollary}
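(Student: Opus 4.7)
The plan is to combine the iteration count of Algorithm~\ref{alg:TupleJoin} with the per-comparison cost established in Lemma~\ref{lm:TupleComparisonCost}. First, I would observe that the nested \textbf{for} loops in Algorithm~\ref{alg:TupleJoin} range, respectively, over all tuples $t_1 \in R_1$ and all tuples $t_2 \in R_2$, so the body---which constructs a prompt via \textproc{TuplePrompt}, calls \textproc{InvokeLLM}, and conditionally updates the result set---executes exactly $r_1 \cdot r_2$ times.

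Next, I would invoke Lemma~\ref{lm:TupleComparisonCost} to charge each such iteration a cost of $p + s_1 + s_2 + g$. Since LLM invocations across iterations are independent (each prompt is constructed from scratch, no tokens are shared or amortized across calls under the stated cost model, and the conditional insertion into $R$ is a local in-memory operation that is not billed), the total processing cost is simply the sum of the per-iteration costs. Pulling the common factor out of the sum yields the claimed expression $r_1 \cdot r_2 \cdot (p + s_1 + s_2 + g)$.

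There is no real obstacle here: the corollary is a one-line counting argument on top of Lemma~\ref{lm:TupleComparisonCost}, which is precisely why the author flags it as a corollary rather than a lemma. The only modeling subtlety worth acknowledging is that the static prompt prefix of size $p$ is re-read on every invocation (no prompt caching is assumed), but this is already folded into the per-comparison cost and so needs no separate justification at the corollary level.
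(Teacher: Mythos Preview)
Your proposal is correct and mirrors the paper's own proof: both multiply the per-comparison cost from Lemma~\ref{lm:TupleComparisonCost} by the $r_1\cdot r_2$ iterations of the nested loops in Algorithm~\ref{alg:TupleJoin}. Your additional remarks about independence of invocations and absence of prompt caching are consistent with the paper's cost model but go beyond what the paper itself states.
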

\begin{proof}
    This follows from the cost per comparison (Lemma~\ref{lm:TupleComparisonCost}) and the number of comparisons, determined by the number $r_1\cdot r_2$ of iterations of the innermost nested loop.
\end{proof}
\section{Block Nested Loops Join}
\label{sec:BlockJoin}

This section introduces a variant of the block nested loops join, as well as an associated cost model.

\subsection{Algorithm}

\begin{algorithm}[t]
\begin{algorithmic}[1]
    \State \Comment{Perform block join between relations $R_1$ and $R_2$}
    \State \Comment{with join condition $j$ and using block sizes $b_1$ and $b_2$.}
    \Function{BlockJoin}{$R_1,R_2,j,b_1,b_2$}
    \State \Comment{Initialize result set}
    \State $R\gets\emptyset$
    \State \Comment{Partition input into batches}
    \State $\mathcal{B}_1\gets\{B_i\subseteq R_1|R_1=\dot{\cup}_iB_i,\forall i|B_i|=b_1\}$
    \State $\mathcal{B}_2\gets\{B_i\subseteq R_2|R_2=\dot{\cup}_iB_i,\forall i|B_i|=b_2\}$
    \State \Comment{Iterate over pairs of batches}
    \For{$B_1\in \mathcal{B}_1$}
    \For{$B_2\in \mathcal{B}_2$}
    \State \Comment{Create prompt for LLM}
    \State $P\gets$\Call{BlockPrompt}{$B_1,B_2,j$}
    \State \Comment{Get raw answer from LLM}
    \State $A\gets$\Call{InvokeLLM}{$P$}
    \State \Comment{Check for overflow}
    \If{$A[-1]\neq$\textbf{Finished}}
    \State \Return{\textbf{<Overflow>}}
    \EndIf
    \State \Comment{Extract result tuples}
    \State $R\gets R\cup$\Call{ExtractTuples}{$B_1,B_2,A$}
    \EndFor
    \EndFor
    \State \Comment{Return join result}
    \State \Return{$R$}
    \EndFunction
\end{algorithmic}
    \caption{Block nested loops join algorithm for semantic joins, executed via large language models.\label{alg:BlockJoin}}
\end{algorithm}

Algorithm~\ref{alg:BlockJoin} uses similar input parameters as Algorithm~\ref{alg:TupleJoin}, namely two input tables ($R_1$ and $R_2$) and a join condition $j$. In addition, Algorithm~\ref{alg:BlockJoin} uses input parameters $b_1$ and $b_2$, representing the number of tuples from the first and second table that are processed together as one batch. The choice of those parameter values is non-trivial and analyzed in the following sections.

Algorithm~\ref{alg:BlockJoin} starts by partitioning tuples from both input tables, using the specified batch sizes (the pseudo-code is slightly simplified, based on the assumption that the number of tuples in each table is a multiple of the batch sizes). Instead of iterating over pairs of tuples, the algorithm iterates over pairs of tuple batches. For each pair of batches, the algorithm uses a language model to retrieve all tuple pairs that satisfy the join condition. Instead of invoking the language model for each tuple pair, Algorithm~\ref{alg:BlockJoin} invokes the model only once for each pair of tuple batches.

\begin{figure}
    \centering
    \begin{verbatim}
    Find indexes x,y where x is the number of an entry 
    in collection 1 and y the number of an entry in 
    collection 2 such that [j] (make sure to catch 
    all pairs!)!
    Separate index pairs by semicolons.
    Write "Finished" after the last pair!
    Text Collection 1:
    1. [B1[1]]
    2. [B1[2]]
    ...
    Text Collection 2:
    1. [B2[1]]
    2. [B2[2]]
    ...
    Index pairs:
    \end{verbatim}
    \caption{Prompt template used for block nested loops join (instantiated by Function~\textproc{BlockPrompt} in pseudo-code).}
    \label{fig:BlockTemplate}
\end{figure}

Figure~\ref{fig:BlockTemplate} shows the corresponding prompt template, instantiated by Function~\textproc{BlockPrompt}. The prompt contains placeholders for the join condition, \verb|[j]|, and for the tuples in each block, denoted as \verb|[Bi[j]]| where \verb|i| is the index of the table containing the tuples and \verb|j| the index of a tuple within the current tuple batch. The template starts with instructions, directing the language model to find pairs of indexes that represent matching tuples. Each pair of matching tuples is denoted as \verb|x,y| where \verb|x| refers to the position of a tuple from the first batch and \verb|y| to the position of the tuple within the second batch. While seemingly redundant, the additional instructions \verb|make sure to catch all pairs!| are important to encourage the language model to generate a complete result. The number of matching tuple pairs may range from zero to the product of the two input batch sizes. The prompt instructs the language model to use semicolons to separate different index pairs.

The number of output tokens is limited, determined by the properties of the used language model. If reaching the limit in terms of output tokens, the answer generated by the language model becomes inconclusive. It is unclear whether the language model found all matching pairs or ran out of tokens before being able to generate complete output. For that reason, the prompt in Figure~\ref{fig:BlockTemplate} instructs the language model to mark the last matching index pair with the word ``Finished''. If the word ``Finished'' concludes the output, even when reaching the token limit, it is clear that the output contains all matching tuples (at least all matches that the language model is able to find). Finally, the prompt template contains tuples from the two input batches, each prefixed by a batch-specific index number. 

In principle, asking the language model to write complete result tuples (i.e., to copy matching input tuples) is possible as well. However, as the cost for generating output is proportional to the number of generated tokens (and, at least for some models, generating tokens is more expensive than reading tokens), generating index pairs, rather than result tuples, reduces processing fees.

Algorithm~\ref{alg:BlockJoin} sends prompts generated for the current pair of batches to the language model to retrieve an answer. First of all, Algorithm~\ref{alg:BlockJoin} checks whether a complete result (according to the capabilities of the language model) was generated. As the prompt instructs the language model to terminate output with the keyword ``Finished'', the algorithm checks the last word in the answer using the (Python-inspired) notation $A[-1]$. If the keyword is not ``Finished'', the join operator returns the flag \textbf{<Overflow>}. This means that the result is incomplete and the settings for the batch sizes, $b_1$ and $b_2$, are invalid. This can happen if initial estimates on the selectivity of the join condition, determining the number of output tokens that are generated, turn out to be erroneous. Section~\ref{sec:Adaptive} shows how to handle such cases. If no overflow occurs, the tuples associated with the index pairs are added to the result set. Function~\textproc{ExtractTuples} (the pseudo-code is omitted due to space restrictions) translates index pairs in the answer into tuple pairs.

\subsection{Cost Model}
\label{sub:CostModel}


Parameters $r_1$ and $r_2$ denote the number of rows in the first and second table respectively. Parameters $s_1$, $s_2$, and $s_3$ denote the (token) size of tuples in the two input tables and per result index pair ($s_3$), respectively. Parameter $p$ is the size of the tuple-independent parts of the prompt represented in Figure~\ref{fig:BlockTemplate} (i.e., all text except for the parts that describe the input tuples). Parameter $\sigma$ represents the selectivity of that join condition, i.e., the ratio of input tuple combinations satisfying the join condition. Finally, parameter $g$ represents the relative cost of generating tokens, relative to the cost of reading tokens. For some LLMs, the cost of reading and generating tokens is equal (i.e., $g=1$) but for some of the more recent models (e.g., GPT-4), the cost of generating tokens is higher than the cost of reading them (i.e., $g>1$). Parameters $b_1$ and $b_2$ denote the batch sizes for the first and second table (i.e., the input parameters in Algorithm~\ref{alg:BlockJoin}). Parameters related to size and selectivity (namely, parameters $s_1$, $s_2$, $s_3$, $r_1$, $r_2$, and $\sigma$) depend on data properties whereas $g$ depends on the LLM and $p$ is specified by the user. Only the values for parameters $b_1$ and $b_2$ can be chosen. 


The following lemmata and theorems calculate the number of LLM invocations, the number of tokens processed per invocation, and the cost per LLM invocation. Note that the following analysis is simplifying as it treats all parameters as continuous (e.g., $r_1/b_1$, as opposed to $\lceil r_1/b_1\rceil$, when calculating the number of batches for the first table). This facilitates the analysis in the following sections, applying differentiation to obtain optimal values for tuning parameters $b_1$ and $b_2$.


\begin{lemma}\label{lm:PromptSize}
    The number of tokens processed per LLM invocation is given by  $p+b_1\cdot s_1+b_2\cdot s_2+b_1\cdot b_2\cdot\sigma\cdot s_3$.
\end{lemma}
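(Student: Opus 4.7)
The plan is to decompose the token count of one LLM invocation into the contributions from each structural component of the prompt shown in Figure~\ref{fig:BlockTemplate}, together with the generated output, and then sum them up. The prompt has three natural parts: the tuple-independent boilerplate (instructions, keyword markers, the join predicate $j$, and the surrounding formatting), the rendering of the first tuple batch, and the rendering of the second tuple batch. By definition of $p$, the first part contributes exactly $p$ tokens. Each of the $b_1$ tuples from $R_1$ has average size $s_1$, so the first collection contributes $b_1\cdot s_1$ tokens; symmetrically, the second collection contributes $b_2\cdot s_2$ tokens. The small per-tuple index prefixes (``1.'', ``2.'', \ldots) can be folded into $p$ or into $s_i$ by convention, so no separate term is needed.

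Next I would account for the output tokens. Following the prompt template, the model is asked to emit one index pair per matching combination, with a fixed per-pair token footprint of $s_3$ (covering the two indices, the comma, and the semicolon or the trailing ``Finished'' marker, which can again be absorbed into $p$ or $s_3$). The number of candidate combinations examined in a single invocation is $b_1\cdot b_2$, and by the definition of $\sigma$ as the fraction of pairs that satisfy the predicate, the expected number of generated pairs is $b_1\cdot b_2\cdot\sigma$. Therefore the output contributes $b_1\cdot b_2\cdot\sigma\cdot s_3$ tokens. Summing the four contributions yields exactly the claimed expression $p+b_1\cdot s_1+b_2\cdot s_2+b_1\cdot b_2\cdot\sigma\cdot s_3$.

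The only genuinely delicate point is interpretive rather than computational: the quantity $b_1\cdot b_2\cdot\sigma\cdot s_3$ is an expected value, and the lemma implicitly treats this expectation as ``the'' number of output tokens. I would flag this explicitly, noting that the same continuous idealization already mentioned in the paragraph preceding the lemma (treating $r_1/b_1$ as continuous rather than taking ceilings) is what licenses the use of $\sigma$ in this averaged fashion. With that caveat stated, the proof is a one-line addition once the four components have been identified, so the writeup is essentially a bookkeeping argument and the main work is simply making the correspondence between prompt template and cost terms unambiguous.
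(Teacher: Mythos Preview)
Your proposal is correct and follows essentially the same decomposition as the paper: task description ($p$), first batch ($b_1\cdot s_1$), second batch ($b_2\cdot s_2$), and expected output ($b_1\cdot b_2\cdot\sigma\cdot s_3$), summed. The paper's own proof is terser and does not explicitly flag the expectation caveat or the absorption of index prefixes, but the underlying argument is identical.
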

\begin{proof}
    Each prompt contains a batch of $b_1$ tuples from the first table with a size per tuple of $s_1$, i.e., $b_1\cdot s_1$ is the number of tokens used to represent entries from the first table. Similarly, entries from the second table consume $b_2\cdot s_2$ tokens. The expected number of join result tuples is given by $b_1\cdot b_2\cdot \sigma$ and their size by $b_1\cdot b_2\cdot \sigma\cdot s_3$. Finally taking into account tokens required for the join task description ($p$) yields the postulated size formula.
\end{proof}

\begin{lemma}\label{lm:PromptCost}
    The cost per LLM invocation is given by the formula $p+b_1\cdot s_1+b_2\cdot s_2+b_1\cdot b_2\cdot\sigma\cdot s_3\cdot g$.
\end{lemma}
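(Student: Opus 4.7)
The plan is to piggyback on Lemma~\ref{lm:PromptSize}, which already decomposes the token budget of a single invocation, and then reweight the generated portion by the factor $g$. Concretely, I would reread the prompt template in Figure~\ref{fig:BlockTemplate} and the description of \textproc{InvokeLLM} to identify, token by token, which parts correspond to input (tokens read by the model) and which parts correspond to output (tokens generated by the model). The static task description contributes $p$ read tokens, and each batch contributes $b_1\cdot s_1$ and $b_2\cdot s_2$ read tokens respectively; these three summands are exactly the read portion of the formula in Lemma~\ref{lm:PromptSize}. The only generated portion is the list of matching index pairs, whose expected token count is $b_1\cdot b_2\cdot\sigma\cdot s_3$ by the same selectivity argument used in Lemma~\ref{lm:PromptSize}.

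Having separated input from output, the second step is purely a cost conversion. By the definition of $g$ in Section~\ref{sub:CostModel}, a generated token costs $g$ times as much as a read token. Therefore the cost of the invocation equals
\begin{equation*}
  \underbrace{p+b_1\cdot s_1+b_2\cdot s_2}_{\text{read tokens}} \;+\; g\cdot\underbrace{b_1\cdot b_2\cdot\sigma\cdot s_3}_{\text{generated tokens}},
\end{equation*}
which is precisely the claimed expression. Summing linearly is justified because the cost model in Section~\ref{sub:CostModel} treats read and generated token charges as additive and independent.

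There is essentially no technical obstacle here: the result is a direct corollary of Lemma~\ref{lm:PromptSize} combined with the definition of $g$. The only subtlety worth flagging in the write-up is that $b_1\cdot b_2\cdot\sigma\cdot s_3$ is an expectation (the actual number of matching pairs is a random variable), so the stated cost is an expected cost per invocation; this matches how the cost is used in subsequent sections, and mirrors the simplifying continuous treatment already announced before Lemma~\ref{lm:PromptSize}. I would therefore keep the proof to two or three sentences, explicitly citing Lemma~\ref{lm:PromptSize} for the token counts and the definition of $g$ for the scaling, and noting the expectation caveat in passing.
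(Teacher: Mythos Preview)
Your proposal is correct and mirrors the paper's own proof: both cite Lemma~\ref{lm:PromptSize} for the token decomposition, identify the join-result tokens ($b_1\cdot b_2\cdot\sigma\cdot s_3$) as the only generated portion, and scale that term by $g$. Your added remark about the expectation caveat is a reasonable clarification but not something the paper makes explicit here.
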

\begin{proof}
    The proof is similar to the one of Lemma~\ref{lm:PromptSize}. Costs are proportional to the number of tokens, except that it distinguishes tokens read from generated tokens. The LLM only generates tokens associated with the join result. Therefore, the number of corresponding tokens ($b_1\cdot b_2\cdot\sigma\cdot s_3$) is scaled by factor $g$ to obtain associated costs.
\end{proof}

\begin{lemma}
    The number of LLM invocations for join processing is given by the formula $(r_1/b_1)\cdot(r_2/b_2)$. \label{lm:NrPrompts}
\end{lemma}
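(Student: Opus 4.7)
The plan is to read the invocation count directly off the control flow of Algorithm~\ref{alg:BlockJoin}. The key observations are that (i) the algorithm partitions each input table into disjoint batches of the specified size, and (ii) the doubly nested \textbf{for}-loop performs exactly one call to \textproc{InvokeLLM} per pair of batches. Combining these two facts, the total number of invocations is the product of the batch counts for the two tables.

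First I would count the batches. The partition $\mathcal{B}_1$ consists of disjoint subsets of $R_1$ whose union is $R_1$ and each of which has cardinality $b_1$; hence $|\mathcal{B}_1|=r_1/b_1$. The same argument gives $|\mathcal{B}_2|=r_2/b_2$. Here I rely on the simplifying assumption, already declared at the start of Section~\ref{sub:CostModel} and reiterated in the pseudocode comments, that all quantities are treated as continuous (so that $r_i/b_i$ need not be rounded). Second, I would point out that each iteration of the outer loop pairs one batch from $\mathcal{B}_1$ with every batch in $\mathcal{B}_2$, and each inner iteration triggers exactly one \textproc{InvokeLLM} call on line~14. Thus the total number of invocations equals $|\mathcal{B}_1|\cdot|\mathcal{B}_2|=(r_1/b_1)\cdot(r_2/b_2)$.

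The only potential obstacle is purely a matter of bookkeeping: an overflow could in principle cause the algorithm to return early, shortening the sequence of invocations, and integrality could force ceilings. Both issues are explicitly excluded by the conventions of the cost analysis in Section~\ref{sub:CostModel} (continuous parameters, and a cost model that assumes $b_1,b_2$ are chosen so that the complete result fits within the token budget; the overflow case is deferred to Section~\ref{sec:Adaptive}). Under these conventions the proof is a one-line counting argument and nothing more is required.
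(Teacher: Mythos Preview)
Your proposal is correct and follows essentially the same approach as the paper: both argue that the partitions yield $r_1/b_1$ and $r_2/b_2$ batches, respectively, and that the doubly nested loop invokes the LLM exactly once per pair of batches. Your write-up is simply more explicit about the simplifying conventions (continuous parameters, no overflow), which the paper leaves implicit.
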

\begin{proof}
    This follows from the definition of Algorithm~\ref{alg:BlockJoin}. The LLM is called in each iteration of the inner-most loop. The outer loop iterates $r_1/b_1$ times whereas the inner loop iterates $r_2/b_2$ times.
\end{proof}

\begin{corollary}
    Total join processing costs are given by the formula $c(b_1,b_2)=(r_1/b_1)\cdot (r_2/b_2)\cdot(p+b_1\cdot s_1+b_2\cdot s_2+b_1\cdot b_2\cdot\sigma\cdot s_3\cdot g)$.
\end{corollary}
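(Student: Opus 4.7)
The plan is to derive the total cost by composing the two preceding lemmata in the obvious way: total cost equals the number of LLM invocations multiplied by the (expected) cost incurred per invocation. Since Lemma~\ref{lm:PromptCost} already gives a closed-form expression for the per-invocation cost $p+b_1\cdot s_1+b_2\cdot s_2+b_1\cdot b_2\cdot\sigma\cdot s_3\cdot g$, and Lemma~\ref{lm:NrPrompts} gives the number of invocations as $(r_1/b_1)\cdot(r_2/b_2)$, the statement of the corollary is obtained simply by taking the product of these two quantities and identifying it as $c(b_1,b_2)$.

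The only substantive thing to check is that multiplying these two expressions is legitimate, i.e., that the per-invocation cost is the same (in expectation) across all iterations of the nested loop in Algorithm~\ref{alg:BlockJoin}. First I would note that the tuple-independent prompt overhead $p$ does not depend on which batches $B_1,B_2$ are currently selected, and that the token footprint $b_1\cdot s_1+b_2\cdot s_2$ for the input data is determined entirely by the batch sizes and the (average) per-tuple token sizes $s_1,s_2$, which, under the simplifying continuous treatment already invoked in Section~\ref{sub:CostModel}, do not vary across iterations. Second, I would argue that the expected number of output tokens per invocation is $b_1\cdot b_2\cdot\sigma\cdot s_3$ for every batch pair, because $\sigma$ is defined as the selectivity averaged over all input tuple combinations, so that by linearity of expectation each individual pair of batches contributes the same expected number of matches.

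Having justified that each of the $(r_1/b_1)\cdot(r_2/b_2)$ iterations has identical expected cost as given by Lemma~\ref{lm:PromptCost}, I would conclude by multiplying to obtain
\[
c(b_1,b_2) \;=\; \frac{r_1}{b_1}\cdot\frac{r_2}{b_2}\cdot\bigl(p+b_1\cdot s_1+b_2\cdot s_2+b_1\cdot b_2\cdot\sigma\cdot s_3\cdot g\bigr),
\]
which matches the claimed formula exactly. The main (and in fact only) subtlety is the appeal to linearity of expectation for the output-token term; everything else is either a direct substitution from Lemma~\ref{lm:PromptCost} and Lemma~\ref{lm:NrPrompts} or a consequence of the continuous-parameter idealization stated at the start of Section~\ref{sub:CostModel}. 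No differentiation or optimization is needed at this stage, as the corollary only asserts the cost function's form; tuning $b_1$ and $b_2$ is deferred to the next section.
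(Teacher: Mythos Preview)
Your proposal is correct and follows exactly the same approach as the paper: total cost is obtained as the product of the per-invocation cost from Lemma~\ref{lm:PromptCost} and the number of invocations from Lemma~\ref{lm:NrPrompts}. Your additional justification (uniformity of per-invocation cost via linearity of expectation) is more explicit than the paper's one-line proof, but the underlying argument is identical.
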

\begin{proof}
    This is a direct consequence of Lemmas~\ref{lm:PromptCost} and \ref{lm:NrPrompts}, obtained by multiplying the cost per LLM invocation with the number of LLM invocations.
\end{proof}

\section{Optimizing for Known Selectivity}
\label{sec:Optimization}

Processing fees of the block join, introduced in the previous section, depend on settings for the batch sizes (parameters $b_1$ and $b_2$). This section shows how to optimize batch sizes as a function of the input properties. The following example illustrates how processing fees depend on the batch size.

\begin{figure}[t]
    \centering
    \includegraphics{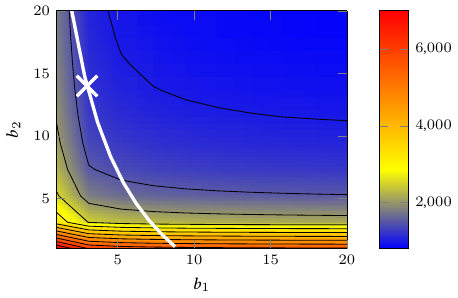}
    \caption{Illustrating join processing costs as a function of the two input batch sizes ($b_1$ and $b_2$), using $r_1=50$, $r_2=10$, $s_1=10$, $s_2=2$, $s_3=1$, $\sigma=1$, $g=1$, $p=1$. All solutions under the white curve use prompts with a size at or below 100 tokens. The white X marks the solution with minimal cost among all solutions with a prompt size of up to 100 tokens.}
    \label{fig:CostFunction}
\end{figure}


\begin{example}
Figure~\ref{fig:CostFunction} plots join cost for an example scenario. A-priori, choosing higher values for $b_1$ and $b_2$ seems preferable. However, in practice, the values of $b_1$ and $b_2$ are bounded by limits imposed by the LLM on the number of tokens read and generated per invocation. The white line in Figure~\ref{fig:CostFunction} marks value combinations for $b_1$ and $b_2$ for which the number of processed tokens reaches 100. Given a limit on processed tokens, we want to find values for $b_1$ and $b_2$ that comply with that token limit (in Figure~\ref{fig:CostFunction}, those are the points below the white line) while minimizing costs under that constraint. The white X marks the optimal solution in Figure~\ref{fig:CostFunction}.
\end{example}

\subsection{Analyzing Costs}
\label{sub:CostAnalysis}

The combined input and output size per LLM invocation is generally limited, either by a hard bound representing the maximal input and output size that a model can accept or by a (smaller) bound, representing the maximal size for which the model is deemed accurate enough. The second bound is motivated by the observation that LLMs tend to become less reliable with growing input sizes. In the following, $t$ denotes the maximal number of tokens that can be used per LLM invocation. To simplify the following formulas, $t$ does not take into account the size of the task description, $p$, which remains static over all prompts. In other words, $t$ is obtained by already subtracting $p$ from the LLM-specific size bound. To comply with the size limit, the following equation must hold.

\begin{equation}
b_1\cdot s_1+b_2\cdot s_2+b_1\cdot b_2\cdot s_3\cdot\sigma\leq t
\end{equation}

This raises the question of whether or not choosing values for $b_1$ and $b_2$ that lead to LLM invocations using less than the maximally allowed number of tokens is efficient. The following theorem shows that this is not the case.

\begin{theorem}\label{th:MaxSizeMinCost}
Maximizing the number of tokens processed per LLM invocation minimizes processing costs.
\end{theorem}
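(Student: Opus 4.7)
The plan is to start from the closed-form expression
\[
c(b_1,b_2)=r_1r_2\left(\frac{p}{b_1b_2}+\frac{s_1}{b_2}+\frac{s_2}{b_1}+\sigma s_3 g\right),
\]
which follows by expanding the formula from the preceding corollary. From this form, I would observe that the last term is constant in $(b_1,b_2)$, and that the remaining three terms are strictly positive and strictly decreasing in each of $b_1$ and $b_2$ (holding the other fixed). One can make this explicit by computing $\partial c/\partial b_1$ and $\partial c/\partial b_2$ and checking that both are strictly negative for all $b_1,b_2>0$, but the monotonicity is already transparent from the formula.

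Next, I would invoke Lemma~\ref{lm:PromptSize} to write the per-invocation token count as
\[
T(b_1,b_2)=p+b_1 s_1+b_2 s_2+b_1 b_2 \sigma s_3,
\]
and note that $T$ is strictly increasing in each argument, since the partial derivatives $s_1+b_2\sigma s_3$ and $s_2+b_1\sigma s_3$ are both strictly positive. Together with the previous step, this gives the key structural fact: any simultaneous increase of a batch size strictly decreases cost while strictly increasing the prompt size.

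The final step is a contradiction argument. Suppose, for contradiction, that some feasible pair $(b_1,b_2)$ is cost-optimal but does not saturate the token budget, i.e.\ $T(b_1,b_2)<t+p$. Then by continuity of $T$ there exists $\varepsilon>0$ such that $(b_1+\varepsilon,b_2)$ remains feasible, and by the monotonicity established above this new pair has strictly lower cost, contradicting optimality. Hence every cost-minimizer must lie on the surface where the token count is maximized, which is the claim of the theorem.

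I expect the main subtlety to be nothing more than making the relaxation to continuous $b_1,b_2$ explicit (as the excerpt already does) and being careful that the feasible region is nonempty and that the perturbation $\varepsilon$ preserves feasibility; once those bookkeeping points are in place, the argument reduces to the joint monotonicity of $c$ (decreasing) and $T$ (increasing) in each batch size.
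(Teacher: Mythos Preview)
Your proposal is correct and follows essentially the same approach as the paper: both arguments show that $c(b_1,b_2)$ is strictly decreasing in each batch size, so any feasible point not saturating the token constraint can be improved. The only cosmetic difference is that you first expand $c$ into the form $r_1r_2\bigl(p/(b_1b_2)+s_1/b_2+s_2/b_1+\sigma s_3 g\bigr)$ and read off monotonicity directly, whereas the paper substitutes $b_1\mapsto\alpha b_1$ with $\alpha>1$ and simplifies algebraically to reach the same conclusion.
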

\begin{proof}
Assume that the prompt size is below the threshold, i.e., $b_1\cdot s_1+b_2\cdot s_2+b_1\cdot b_2\cdot s_3\cdot\sigma< t$. Furthtermore, without restriction of generality, assume that $b_1$ can be replaced by $b_1^*=\alpha \cdot b_1$ for an $\alpha>1$ such that $b_1^*\cdot s_1+b_2\cdot s_2+b_1^*\cdot b_2\cdot s_3\cdot\sigma\leq t$. How do total processing costs with $b_1$ ($c(b_1,b_2)$) relate to the ones with $b_1^*$ ($c(b_1^*,b_2)$)? It is $c(b_1^*,b_2)=(r_1/b_1^*)\cdot (r_2/b_2)\cdot(p+b_1^*\cdot s_1+b_2\cdot s_2+b_1^*\cdot b_2\cdot\sigma\cdot s_3\cdot g)$. This can be rewritten as $(r_1/(b_1\cdot\alpha))\cdot (r_2/b_2)\cdot(p+b_1\cdot\alpha\cdot s_1+b_2\cdot s_2+b_1\cdot\alpha\cdot b_2\cdot\sigma\cdot s_3\cdot g)$, which simplifies to $(r_1/b_1)\cdot (r_2/b_2)\cdot(p/\alpha+b_1\cdot s_1+b_2\cdot s_2/\alpha+b_1\cdot b_2\cdot\sigma\cdot s_3\cdot g)$. Since $\alpha>1$, it is $c(b_1^*,b_2)\leq (r_1/b_1)\cdot (r_2/b_2)\cdot(p+b_1\cdot s_1+b_2\cdot s_2+b_1\cdot b_2\cdot\sigma\cdot s_3\cdot g)=c(b_1,b_2)$. If replacing $b_2$ with $b_2\cdot\alpha$ with $\alpha>1$, similar reasoning shows that the cost can only decrease. Hence, increasing the number of tokens processed per LLM invocation, if possible, decreases costs.
\end{proof}

\begin{example}
Consider the cost function depicted in Figure~\ref{fig:CostFunction}. As discussed before, the white curve marks points at which the number of tokens processed per LLM invocation equals the threshold. Due to Theorem~\ref{th:MaxSizeMinCost}, values for $b_1$ and $b_2$ that minimize join processing costs while complying with token limits must be on that curve.
\end{example}


The following lemma shows that the optimal value for $b_2$ can be expressed as a function of $b_1$ (denoted as the function $b_2(b_1)$).




\begin{lemma}\label{lm:B2ofB1}
    Any solution minimizing $c(b_1,b_2)$ satisfies the equation $b_2=b_2(b_1)=(t-b_1\cdot s_1)/(s_2+b_1\cdot s_3\cdot\sigma)$.
\end{lemma}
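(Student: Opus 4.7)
The plan is to invoke Theorem~\ref{th:MaxSizeMinCost} and then solve a single linear equation. First, I would observe that any cost-minimizing choice of $(b_1,b_2)$ must saturate the token budget. Indeed, Theorem~\ref{th:MaxSizeMinCost} says that whenever the per-invocation token usage $b_1 s_1 + b_2 s_2 + b_1 b_2 s_3 \sigma$ lies strictly below $t$, we can scale $b_1$ (or $b_2$) by some $\alpha > 1$ without violating the threshold, and doing so strictly decreases $c(b_1,b_2)$. Hence at an optimum the inequality constraint is active:
\begin{equation}
b_1 \cdot s_1 + b_2 \cdot s_2 + b_1 \cdot b_2 \cdot s_3 \cdot \sigma = t.
\end{equation}

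Next, I would solve this equation for $b_2$ with $b_1$ treated as a parameter. Grouping terms in $b_2$ gives $b_2 (s_2 + b_1 s_3 \sigma) = t - b_1 s_1$, and dividing through by $s_2 + b_1 s_3 \sigma$ (which is positive since $s_2, s_3, \sigma, b_1 \geq 0$ and $s_2 > 0$) yields exactly the claimed expression $b_2(b_1) = (t - b_1 s_1)/(s_2 + b_1 s_3 \sigma)$.

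There is essentially no obstacle here: the content of the lemma is entirely contained in Theorem~\ref{th:MaxSizeMinCost}, and what remains is routine algebra. The only point worth mentioning is the implicit assumption $b_1 s_1 \leq t$, which is necessary for the expression to represent a nonnegative batch size; for values of $b_1$ violating this, no feasible $b_2 \geq 0$ exists, so such $b_1$ cannot occur in an optimal solution. With that caveat, the lemma follows immediately from the first-order feasibility condition combined with Theorem~\ref{th:MaxSizeMinCost}.
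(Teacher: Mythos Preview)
Your proposal is correct and follows essentially the same approach as the paper: invoke Theorem~\ref{th:MaxSizeMinCost} to conclude the token constraint is tight at an optimum, then solve the resulting equation for $b_2$. Your added remarks about positivity of the denominator and the feasibility condition $b_1 s_1 \leq t$ are nice refinements, but the core argument matches the paper's almost verbatim.
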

\begin{proof}
    Due to Theorem~\ref{th:MaxSizeMinCost}, setting $b_1\cdot s_1+b_2\cdot s_2+b_1\cdot b_2\cdot s_3\cdot\sigma=t$ minimizes processing costs. This equation can be rewritten to $b_2\cdot(s_2+b_1\cdot s_3\cdot\sigma)=t-b_1\cdot s_1$. Therefore, the optimal value for $b_2$ is given as $b_2=(t-b_1\cdot s_1)/(s_2+b_1\cdot s_3\cdot\sigma)$
\end{proof}

According to Lemma~\ref{lm:B2ofB1}, substituting each occurrence of $b_2$ in the join cost function with $b_2(b_1)$ yields minimal processing costs:


\begin{align}
    &c^*(b_1):=c(b_1,b_2(b_1)) \notag \\
    =&\frac{r_1\cdot r_2}{b_1\cdot b_2(b_1)}\cdot(p+b_1\cdot s_1+b_2(b_1)\cdot s_2+b_1\cdot b_2(b_1)\cdot s_3\cdot\sigma\cdot g) \notag \\
    =&\frac{r_1}{b_1}\cdot r_2\cdot (\frac{p+b_1\cdot s_1}{b_2(b_1)}+s_2+b_1\cdot s_3\cdot\sigma\cdot g) \label{eq:CostWithB2ofB1} \notag \\
    =&\frac{r_1}{b_1}\cdot r_2\cdot (\frac{p+b_1\cdot s_1}{(t-b_1\cdot s_1)/(s_2+b_1\cdot s_3\cdot\sigma)}+s_2+b_1\cdot s_3\cdot\sigma\cdot g) \notag \\
    =&r_1\cdot r_2\cdot (\frac{(s_2/b_1+s_3\cdot\sigma)\cdot(p+b_1\cdot s_1)}{(t-b_1\cdot s_1)}+\frac{s_2}{b_1}+s_3\cdot\sigma\cdot g) \notag
\end{align}

Hence, the problem of minimizing a function with two parameters ($c(b_1,b_2)$) under constraints reduces to the problem of minimizing a function that depends on a single parameter ($c^*(b_1)$).





\subsection{Optimizing Costs}

We minimize join processing costs, i.e., $c^*(b_1)$, by a suitable choice for $b_1$. This means we are searching for minima of $c^*(b_1)$. For $b_1^*$ to be a minimum of $c^*(b_1)$, the following conditions must hold:

\begin{alignat*}{3}
\diff{c^*}{b_1}(b_1^*)=0 & \quad\quad & \diff[2]{c^*}{b_1}(b_1^*)>0
\end{alignat*}

The first-order derivative of $c^*$ is given as follows:

\begin{equation}
    \diff{c^*}{b_1}=r_1r_2(t+p)[\frac{b_1^2s_1 s_3\sigma+b_12s_1s_2-s_2t}{(t-b_1 s_1)^2b_1^2}] \label{eq:Derivative1}
\end{equation}


\begin{lemma}
    For $c^*$, $b_*=[-s_1s_2+\sqrt{s_1^2s_2^2+s_1s_2s_3\sigma t}]/(s_1s_3\sigma)$ is a critical point (i.e., the first-order derivative is zero). \label{lm:CriticalPoint}
\end{lemma}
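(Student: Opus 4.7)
The plan is to verify that the stated $b_*$ is a zero of the first-order derivative. Inspecting equation~(\ref{eq:Derivative1}), the factor $r_1 r_2 (t+p)$ is positive by assumption, and the denominator $(t - b_1 s_1)^2 b_1^2$ is positive whenever $0 < b_1 < t/s_1$ (the feasible region). Hence $\diff{c^*}{b_1}(b_1) = 0$ iff the numerator vanishes, i.e.
\begin{equation*}
s_1 s_3 \sigma \cdot b_1^2 + 2 s_1 s_2 \cdot b_1 - s_2 t = 0.
\end{equation*}

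I would then treat this as a quadratic in $b_1$ with coefficients $A = s_1 s_3 \sigma$, $B = 2 s_1 s_2$, $C = -s_2 t$ and apply the quadratic formula to obtain
\begin{equation*}
b_1 = \frac{-2 s_1 s_2 \pm \sqrt{4 s_1^2 s_2^2 + 4 s_1 s_2 s_3 \sigma t}}{2 s_1 s_3 \sigma} = \frac{-s_1 s_2 \pm \sqrt{s_1^2 s_2^2 + s_1 s_2 s_3 \sigma t}}{s_1 s_3 \sigma},
\end{equation*}
after pulling the factor $2$ out of the discriminant and cancelling. The two roots of this quadratic therefore coincide with the only candidates for critical points of $c^*$ in the interior of the feasible region.

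The final step is to pick the correct root. Since all of $s_1, s_2, s_3, \sigma, t$ are positive, the discriminant satisfies $\sqrt{s_1^2 s_2^2 + s_1 s_2 s_3 \sigma t} > s_1 s_2$, so the ``$+$'' root is strictly positive while the ``$-$'' root is negative. As a batch size must be positive, only the ``$+$'' root is admissible, and this is exactly the $b_*$ stated in the lemma.

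The argument is essentially mechanical; the only subtle point is the sign analysis that justifies discarding the negative root and confirming that the chosen $b_*$ lies in the admissible interval $(0, t/s_1)$ so that the denominator in~(\ref{eq:Derivative1}) is indeed nonzero. This can be seen by rewriting the numerator condition as $b_*(s_1 s_3 \sigma \cdot b_* + 2 s_1 s_2) = s_2 t$, which forces $b_* < t/s_1$ since otherwise the left-hand side would exceed $s_2 t$. This completes the verification that $b_*$ is a critical point; the question of whether it is in fact a minimum (which requires the second-order condition stated alongside) is left for the subsequent argument.
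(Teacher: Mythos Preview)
Your proof is correct and follows essentially the same approach as the paper: both reduce the vanishing of the derivative in~(\ref{eq:Derivative1}) to the quadratic $s_1 s_3 \sigma\, b_1^2 + 2 s_1 s_2\, b_1 - s_2 t = 0$, solve via the quadratic formula, and discard the negative root using $\sqrt{s_1^2 s_2^2 + s_1 s_2 s_3 \sigma t} > s_1 s_2$. Your additional verification that $b_* < t/s_1$ (so the denominator does not vanish at $b_*$) is a nice extra sanity check that the paper leaves implicit.
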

\begin{proof}
    It is $r_1r_2(t+p)>0$ since all involved terms are positive. Similarly, it is $(t-b_1s_1)^2b_1^2>0$. Therefore, the derivative of $c^*$ reaches zero iff $b_1^2s_1 s_3\sigma+b_12s_1s_2-s_2t=0$. This is a quadratic equation in $b_1$. The roots are therefore given by $(-2s_1s_2\pm\sqrt{(2s_1s_2)^2-4(s_1s_3\sigma)(-s_2t)})/(2s_1s_3\sigma)$ which simplifies to $[-s_1s_2\pm\sqrt{s_1^2s_2^2+s_1s_2s_3\sigma t}]/(s_1s_3\sigma)$. Also, as $b_1$ represents the batch size, it must be positive. Hence, the only valid solution is $[-s_1s_2+\sqrt{s_1^2s_2^2+s_1s_2s_3\sigma t}]/(s_1s_3\sigma)$. Note that this solution is guaranteed to be positive since $s_1s_2<\sqrt{s_1^2s_2^2+s_1s_2s_3\sigma t}$.
\end{proof}

\begin{theorem}
    For $c^*$, $b_*:=[-s_1s_2+\sqrt{s_1^2s_2^2+s_1s_2s_3\sigma t}]/(s_1s_3\sigma)$ is a minimum. \label{th:GlobalMinimum}
\end{theorem}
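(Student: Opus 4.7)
The plan is to upgrade the critical point from Lemma~\ref{lm:CriticalPoint} to a global minimum by a boundary-behavior argument, rather than by grinding through the second derivative of $c^*$. The idea is that on the relevant domain $c^*$ blows up at both endpoints, so its unique interior critical point must be a minimum.

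First I would fix the domain of $c^*$. Since $b_1$ is a batch size we need $b_1 > 0$, and since Lemma~\ref{lm:B2ofB1} gives $b_2(b_1) = (t - b_1 s_1)/(s_2 + b_1 s_3 \sigma)$ and $b_2$ must also be positive, we need $b_1 < t/s_1$. So the admissible domain is the open interval $(0, t/s_1)$, on which $c^*$ is smooth (the denominators $b_1$ and $t - b_1 s_1$ are both strictly positive).

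Next I would examine the two endpoints using the closed form
\[
c^*(b_1) = r_1 r_2 \left(\frac{(s_2/b_1 + s_3 \sigma)(p + b_1 s_1)}{t - b_1 s_1} + \frac{s_2}{b_1} + s_3 \sigma g\right)
\]
derived just before the theorem. As $b_1 \to 0^+$, the term $s_2/b_1$ diverges to $+\infty$ while the remaining terms stay bounded, so $c^*(b_1) \to +\infty$. As $b_1 \to (t/s_1)^-$, the denominator $t - b_1 s_1$ tends to $0^+$ while the numerator tends to the strictly positive value $(s_2 s_1/t + s_3 \sigma)(p + t)$, so again $c^*(b_1) \to +\infty$.

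Finally I would invoke Lemma~\ref{lm:CriticalPoint}: on $(0, t/s_1)$ the first-order derivative vanishes only at $b_* = [-s_1 s_2 + \sqrt{s_1^2 s_2^2 + s_1 s_2 s_3 \sigma t}]/(s_1 s_3 \sigma)$, and this value indeed lies in $(0, t/s_1)$ (positivity was argued in the lemma; the upper bound follows because otherwise the endpoint behavior would force another critical point in between). A smooth function on an open interval that tends to $+\infty$ at both endpoints and has a unique critical point must attain its global minimum at that critical point. The main obstacle is checking that $b_* < t/s_1$, which I would verify by plugging into the quadratic equation $b_1^2 s_1 s_3 \sigma + 2 b_1 s_1 s_2 - s_2 t = 0$ from Equation~(\ref{eq:Derivative1}) and showing $b_* = t/s_1$ would force $s_1 s_2 t / s_1 + s_1 s_2 t / s_1 > 0$ to cancel $-s_2 t$, a contradiction. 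The alternative route of computing $\diff[2]{c^*}{b_1}(b_*)$ directly is available but would be algebraically much messier and offers no additional insight.
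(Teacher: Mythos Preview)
Your argument is correct but takes a genuinely different route from the paper. The paper applies the second-derivative test directly, and it is far less messy than you anticipate: writing $\diff{c^*}{b_1} = r_1r_2(t+p)\,u(b_1)/v(b_1)$ with $u$ the quadratic numerator from Equation~(\ref{eq:Derivative1}) and $v=(t-b_1s_1)^2b_1^2$, the quotient rule gives $\diff[2]{c^*}{b_1} = r_1r_2(t+p)(u'v-uv')/v^2$, and since $u(b_*)=0$ this collapses at $b_*$ to $r_1r_2(t+p)\,u'/v$. Then $u'=2b_1s_1s_3\sigma+2s_1s_2>0$ and $v>0$ on the domain, so positivity is immediate. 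Your endpoint-behavior argument is a valid alternative with the bonus of yielding a \emph{global} minimum directly, whereas the paper's second-derivative test only establishes a local one (uniqueness of the critical point then upgrades it, but the paper leaves that implicit). Your observation that $b_*<t/s_1$ follows already from the blow-up at both ends---otherwise the interval would contain no critical point at all---is cleaner than the algebraic check you sketch at the end and suffices on its own.
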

\begin{proof}
    The theorem holds if $d^2c^*/db_1^2>0$ at $b_*$ since $b_*$ is a critical point, according to Lemma~\ref{lm:CriticalPoint}. Set $u(b_1)=b_1^2s_1 s_3\sigma+b_12s_1s_2-s_2t$ and $v(b_1)=(t-b_1 s_1)^2b_1^2$. The first-order derivative of $c^*$, $dc^*/db_1$, is $r_1r_2(t+p)u(b_1)/v(b_1)$, according to Equation~\ref{eq:Derivative1}. Due to the quotient rule, it is $d^2c^*/db_1^2=r_1r_2(t+p)[u'v-uv']/v^2$ where $u'=du/db_1$ and $v'=dv/db_1$. As outlined in the proof of Lemma~\ref{lm:CriticalPoint}, $u(b_*)=0$. Hence, at $b_*$, the second-order derivative $d^2c^*/db_1^2$ simplifies to $r_1r_2(t+p)[u'v]/v^2$. It is $u'=d/d b_1[b_1^2s_1 s_3\sigma+b_12s_1s_2-s_2t]=2b_1s_1s_3\sigma+2s_1s_2$. As all constants appearing in this equation are positive with $s_1>0$ and $s_2>0$, $u'$ is strictly positive for positive values of $b_1$. Note that $b_1s_1<t$ since the token threshold $t$ is at least equal to the number of tokens used for representing tuples from the first and second table, $b_1s_1+b_2s_2$, with $b_2s_2>0$ (since each prompt must contain non-empty input from both tables to be useful). Therefore, $v$ is strictly positive for all values of $b_1$. This implies that $d^2c^*/db_1^2$ is greater than zero at $b_*$.
\end{proof}

\begin{example}
    In the example depicted in Figure~\ref{fig:CostFunction}, we have $s_1=10$, $s_2=2$, $\sigma=s_3=1$. Therefore, it is 
    \begin{align}
        b_*=&[-s_1s_2+\sqrt{s_1^2s_2^2+s_1s_2s_3\sigma t}]/(s_1s_3\sigma) \notag \\
        =&[-10\cdot 2+\sqrt{10^2\cdot 2^2+10\cdot 2\cdot 1\cdot 1\cdot 100}]/(10\cdot 1\cdot 1) \notag \\
        =&[-20+\sqrt{2400}]/10 \approx3 \notag
    \end{align}
    This means selecting batches of three tuples from the first table is optimal (i.e., setting $b_1=b_*\approx3$). According to Lemma~\ref{lm:B2ofB1}, the optimal number of tuples per batch for the second table is determined as $b_2=(t-b_1\cdot s_1)/(s_2+b_1\cdot s_3\cdot\sigma)$ and, for $b_1=3$, it is $b_2=(100-3\cdot 10)/(2+3\cdot 1\cdot 1)=14$. Hence, setting $b_1=3$ and $b_2=14$ minimizes cost under the per-prompt token limit. In Figure~\ref{fig:CostFunction}, the white X marks that point.
\end{example}

%
\section{Adaptive Join Algorithm}
\label{sec:Adaptive}

The previous section optimizes batch sizes, assuming that the selectivity $\sigma$ of the join predicate is known. This section relaxes that assumption and shows how to deal with an unknown selectivity.


\subsection{Algorithm}

\begin{figure}
    \centering
    \begin{tikzpicture}
        \begin{groupplot}[group style={group size=1 by 2, x descriptions at=edge bottom}, xlabel={Selectivity $\sigma$}, width=6cm, height=4cm, xmin=0, xmax=1, ylabel near ticks]
            \nextgroupplot[ylabel={Batch Size}, legend entries={$b_1$, $b_2$}, legend pos=outer north east, legend columns=1, ymajorgrids, title={Optimal Batch Size}]
            \addplot+[domain=0:1, samples=50, mark size=1] {(-10*2+sqrt(10*10*2*2+10*2*1*x*100))/(10*1*x)};
            \addplot+[domain=0:1, samples=50, mark size=1] {(100-10*(-10*2+sqrt(10*10*2*2+10*2*1*x*100))/(10*1*x))/(2+x*1*(-10*2+sqrt(10*10*2*2+10*2*1*x*100))/(10*1*x))};
            \nextgroupplot[stack plots=y, area style, ylabel={\# Tokens}, legend entries={Input 1, Input 2, Output}, legend pos=outer north east, title={Optimal Allocation of Tokens}, ymin=0]
            \addplot+[domain=0:1, samples=50, pattern=horizontal lines, pattern color=black, preaction={fill,blue!50}] {10*(-10*2+sqrt(10*10*2*2+10*2*1*x*100))/(10*1*x)} \closedcycle;
            \addplot+[domain=0:1, samples=50, pattern=vertical lines, pattern color=black, preaction={fill,red!50}] {2*(100-10*(-10*2+sqrt(10*10*2*2+10*2*1*x*100))/(10*1*x))/(2+x*1*(-10*2+sqrt(10*10*2*2+10*2*1*x*100))/(10*1*x))} \closedcycle;
            \addplot+[domain=0:1, samples=50, pattern=north east lines, pattern color=black, preaction={fill,brown!50}] {100-10*(-10*2+sqrt(10*10*2*2+10*2*1*x*100))/(10*1*x)-2*(100-10*(-10*2+sqrt(10*10*2*2+10*2*1*x*100))/(10*1*x))/(2+x*1*(-10*2+sqrt(10*10*2*2+10*2*1*x*100))/(10*1*x))} \closedcycle;
        \end{groupplot}
    \end{tikzpicture}
    \caption{Impact of selectivity $\sigma$ on optimal batch sizes and token allocations for $r_1=50$, $r_2=10$, $s_1=10$, $s_2=2$, $s_3=1$, $g=1$, $p=1$, and $t=100$.}
    \label{fig:SelectivityImpact}
\end{figure}
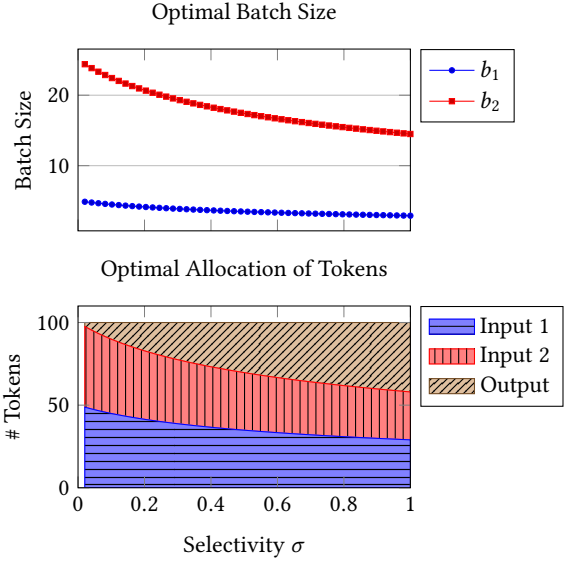

The following example illustrates the impact of selectivity.

\begin{example}
    Figure~\ref{fig:SelectivityImpact} demonstrates the impact of the join predicate selectivity. With the exception of the selectivity estimate, $\sigma$, the example uses the same settings as in Figure~\ref{fig:CostFunction}. The upper plot shows the optimal settings for the batch sizes, $b_1$ and $b_2$, as a function of selectivity (on the x-axis). The lower plot shows how tokens read or written in each LLM invocation are partitioned across tokens representing input from the first and the second table, as well as output tuples (which are written by the LLM). In the example, a higher selectivity motivates smaller batches (the analysis in the following subsection shows that this, as well as other observations from the example, generalize). Intuitively, this makes sense as the number of output tuples increases in the selectivity. Hence, keeping batch sizes constant while selectivity increases leads to an overflow, i.e., the size required for join output exceeds the maximal number of tokens. Consistent with that, the number of tokens reserved for join output increases, relative to tokens reserved for representing input, as selectivity increases.
\end{example}

The example shows that optimal choices for batch sizes depend significantly on selectivity. Traditional selectivity estimation methods, based on data statistics, \textbf{cannot be used} for join predicates in natural language. At the same time, using a selectivity estimates that deviates significantly from the actual selectivity has negative consequences.

Using a selectivity estimate that is too high is inefficient. More precisely, assuming selectivity that is too high means reserving more tokens for join output than necessary. Those tokens could be used for representing more input tuples, thereby reducing the number of iterations and, ultimately, costs. On the other hand, using a selectivity estimate that is too low is ineffective. If not reserving enough tokens for join output, the language model will be unable to generate a complete join result. In that case, the block join algorithm (Algorithm~\ref{alg:BlockJoin}) returns the \textbf{<Overflow>} flag, indicating an incomplete join result.


\begin{algorithm}[t]
\begin{algorithmic}[1]
    \State \Comment{Perform block join with between relations $R_1$ and $R_2$}
    \State \Comment{using join condition $j$ with optimistic selectivity estimate $e$.}
    \Function{AdaptiveJoin}{$R_1,R_2,j,e$}
    \State \Comment{Generate data size statistics}
    \State $stats\gets$\Call{GenerateStatistics}{$R_1,R_2,j$}
    \State \Comment{Initialize join result to overflow flag}
    \State $R\gets$\textbf{<Overflow>}
    \State \Comment{Iterate until complete join result available}
    \While{$R=$\textbf{<Overflow>}}
    \State \Comment{Calculate optimal batch sizes}
    \State $\langle b_1,b_2\rangle\gets$\Call{OptimalBatchSizes}{$stats,e$}
    \State \Comment{Try block join with those sizes}
    \State $R\gets$\Call{BlockJoin}{$R_1,R_2,j,b_1,b_2$}
    \State \Comment{Increase selectivity estimate}
    \State $e\gets e\cdot\alpha$
    \EndWhile
    \State \Comment{Return join result}
    \State \Return{$R$}
    \EndFunction
\end{algorithmic}
    \caption{Adaptive join algorithm, updating selectivity estimates as needed.\label{alg:AdaptiveJoin}}
\end{algorithm}

Fortunately, it turns out that an adaptive processing strategy, shown in Algorithm~\ref{alg:AdaptiveJoin}, achieves near-optimal performance, despite not assuming a precise selectivity estimate. Algorithm~\ref{alg:AdaptiveJoin} starts with an optimistic selectivity estimate, i.e., a selectivity estimate that is assumed to be lower than the actual selectivity. Choosing an estimate that is closer to the actual selectivity may improve performance but the effect is bounded, as shown by the analysis in the following subsection. It is, in principle, possible to start with a \textbf{higher} (i.e., pessimistic) selectivity estimate and \textbf{decrease} it to match the actual selectivity more closely. This approach is equivalent if selectivity is constant across different batches, ensuring that the selectivity observed on a sample is representative. However, in practice, selectivity differs across batches due to data skew. Hence, after lowering the selectivity estimate, meaning that less space is reserved for output in the prompt, it may be necessary to increase estimates again to avoid overflow if later batches have a higher selectivity. However, selectivity updates are undesirable as they cause overheads. When only increasing selectivity estimates, meaning that more and more space in the prompt is reserved to store output, it is never necessary to revert prior decisions and decrease estimates again to ensure that the join operator can finish (i.e., all batches are processed without overflow).

Algorithm~\ref{alg:AdaptiveJoin} calculates all relevant data statistics that appear in the formulas from Section~\ref{sec:Optimization}. For instance, this includes the average token sizes of input tuples from both tables, as well as their cardinality. After that, Algorithm~\ref{alg:AdaptiveJoin} iterates until a complete join result is generated. As a sub-function, it uses the block join algorithm, presented in Section~\ref{alg:BlockJoin}. Batch sizes are calculated, based on the current selectivity estimate. Function~\textproc{OptimalBatchSizes} encapsulates the formulas for calculating optimal batch sizes, derived in Section~\ref{sec:Optimization}. If the block join algorithm returns the \textbf{<Overflow>} flag, the selectivity estimate is increased by a factor of $\alpha$. Factor $\alpha>1$ is a tuning parameter, its impact is studied in the next subsection.

\subsection{Analysis}

The following lemmata establish properties of the optimal batch size for the first table as a function of selectivity: $b_1^*(\sigma)$.

\begin{lemma}
The optimal value for the batch size in the first table with selectivity $\sigma$, $b_1^*(\sigma)$, is anti-monotone in the selectivity $\sigma$.\label{lm:b1antimonotone}
\end{lemma}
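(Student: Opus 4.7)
The plan is to start from the closed form $b_1^*(\sigma)=[-s_1s_2+\sqrt{s_1^2s_2^2+s_1s_2s_3\sigma t}]/(s_1s_3\sigma)$ established in Theorem~\ref{th:GlobalMinimum} and rewrite it so that $\sigma$ no longer appears in the denominator outside of a square root. The obstacle to reading off monotonicity directly from the original expression is that $\sigma$ appears both in the numerator (inside the square root) and as a factor in the denominator, so the two effects pull in opposite directions. Rationalizing will collapse both into a single occurrence that is easy to analyze.

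Concretely, I would first factor $\sqrt{s_1s_2}$ out of the square root to obtain
\[
b_1^*(\sigma)=\frac{\sqrt{s_1s_2}\,\bigl(\sqrt{s_1s_2+s_3t\sigma}-\sqrt{s_1s_2}\bigr)}{s_1s_3\sigma}.
\]
Then I would multiply numerator and denominator by the conjugate $\sqrt{s_1s_2+s_3t\sigma}+\sqrt{s_1s_2}$. The numerator becomes $\sqrt{s_1s_2}\cdot s_3 t\sigma$, and the factor $s_3\sigma$ cancels against the denominator, yielding
\[
b_1^*(\sigma)=\frac{t\sqrt{s_1s_2}}{s_1\bigl(\sqrt{s_1s_2+s_3t\sigma}+\sqrt{s_1s_2}\bigr)}.
\]

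At this point the claim is immediate: the numerator is a positive constant independent of $\sigma$, while the denominator is a sum of a positive constant and the strictly increasing positive function $\sigma\mapsto\sqrt{s_1s_2+s_3t\sigma}$ (all of $s_1,s_2,s_3,t$ are positive). Hence $b_1^*(\sigma)$ is a strictly decreasing function of $\sigma$ on $(0,\infty)$, which is precisely anti-monotonicity. The only step requiring any care is verifying that rationalization is legitimate, i.e.\ that the conjugate factor is nonzero; this is guaranteed because both $\sqrt{s_1s_2+s_3t\sigma}$ and $\sqrt{s_1s_2}$ are strictly positive under the standing assumption that all size and threshold parameters are positive. No calculus is needed, which keeps the argument short and also makes the subsequent lemma (which will likely bound the sensitivity of $b_1^*$ to selectivity changes) easier to derive from this simpler closed form.
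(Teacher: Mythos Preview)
Your proof is correct and follows essentially the same approach as the paper: both rationalize the expression for $b_1^*(\sigma)$ by multiplying by the conjugate, obtaining a form with a $\sigma$-independent numerator over a denominator that is manifestly increasing in $\sigma$. Your preliminary extraction of $\sqrt{s_1s_2}$ from the radical is a cosmetic variation---your final expression $\frac{t\sqrt{s_1s_2}}{s_1(\sqrt{s_1s_2+s_3t\sigma}+\sqrt{s_1s_2})}$ is algebraically identical to the paper's $\frac{s_2t}{\sqrt{s_1^2s_2^2+s_1s_2s_3\sigma t}+s_1s_2}$.
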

\begin{proof}
According to Theorem~\ref{th:GlobalMinimum}, it is 
\begin{equation*}
    b_1^*(\sigma)=\frac{\sqrt{s_1^2s_2^2+s_1s_2s_3\sigma t}-s_1s_2}{s_1s_3\sigma}\,.
\end{equation*}
Multiplying numerator and denominator by $(\sqrt{s_1^2s_2^2+s_1s_2s_3\sigma t}+s_1s_2)$ yields
\begin{align*}
    b_1^*(\sigma)=&\frac{(\sqrt{s_1^2s_2^2+s_1s_2s_3\sigma t}-s_1s_2)(\sqrt{s_1^2s_2^2+s_1s_2s_3\sigma t}+s_1s_2)}{(\sqrt{s_1^2s_2^2+s_1s_2s_3\sigma t}+s_1s_2)s_1s_3\sigma} \\
    =&\frac{s_1^2s_2^2+s_1s_2s_3\sigma t-s_1^2s_2^2}{(\sqrt{s_1^2s_2^2+s_1s_2s_3\sigma t}+s_1s_2)s_1s_3\sigma} \\
    =&\frac{s_2 t}{(\sqrt{s_1^2s_2^2+s_1s_2s_3\sigma t}+s_1s_2)}\,.
\end{align*}
As the numerator does not depend on $\sigma$, while the denominator is monotone in $\sigma$, this fraction and therefore $b_1^*(\sigma)$ is anti-monotone in the selectivity $\sigma$.
\end{proof}

\begin{lemma}
    If $e\geq\sigma\geq e/\alpha$ then $b_1^*(\sigma)\leq \alpha\cdot b_1^*(e)$.\label{lm:B1UpperBound}
\end{lemma}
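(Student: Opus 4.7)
The plan is to reduce the statement, via the anti-monotonicity established in Lemma~\ref{lm:b1antimonotone}, to a pure algebraic inequality between $b_1^*(e/\alpha)$ and $\alpha\cdot b_1^*(e)$, and then dispatch it using the closed form derived inside the proof of that same earlier lemma. Concretely, I would first observe that $\sigma \geq e/\alpha$ combined with anti-monotonicity of $b_1^*$ gives $b_1^*(\sigma) \leq b_1^*(e/\alpha)$, so it suffices to prove the $\sigma$-free statement $b_1^*(e/\alpha) \leq \alpha \cdot b_1^*(e)$. The other hypothesis, $\sigma \leq e$, is not needed for this direction.

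Next I would substitute the closed form $b_1^*(x) = s_2 t / (\sqrt{s_1^2 s_2^2 + s_1 s_2 s_3 x t} + s_1 s_2)$ obtained inside the proof of Lemma~\ref{lm:b1antimonotone}. Writing $A := s_1^2 s_2^2$ and $B := s_1 s_2 s_3 t$ to compress notation, cross-multiplying the two (strictly positive) denominators reduces the claim $b_1^*(e/\alpha) \leq \alpha \cdot b_1^*(e)$ to
\begin{equation*}
\sqrt{A + Be} + \sqrt{A} \;\leq\; \alpha \sqrt{A + Be/\alpha} + \alpha \sqrt{A} .
\end{equation*}
Since $\alpha \geq 1$ trivially gives $\sqrt{A} \leq \alpha\sqrt{A}$, it suffices to establish the remaining piece $\sqrt{A + Be} \leq \alpha \sqrt{A + Be/\alpha}$; squaring both (non-negative) sides turns this into $A + Be \leq \alpha^2 A + \alpha Be$, which holds term by term because $\alpha^2 \geq 1$ and $\alpha \geq 1$.

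The only real content is the reduction via anti-monotonicity; the remaining calculation is elementary. The one point to be careful about is the direction of the anti-monotonicity --- one wants $b_1^*(\sigma) \leq b_1^*(e/\alpha)$, which uses $\sigma \geq e/\alpha$ rather than $\sigma \leq e/\alpha$. I do not foresee any other obstacle.
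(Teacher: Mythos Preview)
Your argument is correct. The paper takes a more direct route: it writes $\alpha\cdot b_1^*(e)$ using the original (unrationalized) closed form $[-s_1s_2+\sqrt{s_1^2s_2^2+s_1s_2s_3et}]/(s_1s_3e)$ from Theorem~\ref{th:GlobalMinimum}, absorbs the factor $\alpha$ into the denominator to obtain $[-s_1s_2+\sqrt{s_1^2s_2^2+s_1s_2s_3et}]/(s_1s_3(e/\alpha))$, and then compares this directly to $b_1^*(\sigma)$ in one step by noting that the numerator can only shrink (since $e\geq\sigma$) while the denominator can only grow (since $e/\alpha\leq\sigma$). So the paper's version actually uses \emph{both} halves of the hypothesis, whereas your decomposition through $b_1^*(e/\alpha)$ via Lemma~\ref{lm:b1antimonotone} reveals that the assumption $\sigma\leq e$ is superfluous for this inequality. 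Your route is marginally longer---it invokes anti-monotonicity explicitly, switches to the rationalized form, and performs a squaring step---but it yields a slightly sharper conclusion and makes the role of anti-monotonicity transparent rather than hiding it inside a single fraction comparison.
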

\begin{proof}
The following holds due to Theorem~\ref{th:GlobalMinimum} and $e/\alpha\leq \sigma$:
    \begin{align*}
        \alpha\cdot b_1^*(e)=&\alpha\cdot[-s_1s_2+\sqrt{s_1^2s_2^2+s_1s_2s_3e t}]/(s_1s_3e) \\
        =&[-s_1s_2+\sqrt{s_1^2s_2^2+s_1s_2s_3e t}]/(s_1s_3(e/\alpha)) \\
        \geq &[-s_1s_2+\sqrt{s_1^2s_2^2+s_1s_2s_3\sigma t}]/(s_1s_3\sigma) \\
        =&b_1^*(\sigma)
    \end{align*}
\end{proof}

The following lemma analyzes the product of optimal batch sizes as a function of selectivity, denoted as $b_1^*(\sigma)$ and $b_2^*(\sigma)$.

\begin{lemma}
    If $e\geq\sigma\geq e/\alpha$ then $b_1^*(\sigma)\cdot b_2^*(\sigma)\leq\alpha\cdot b_1^*(e)\cdot b_2^*(e)$.\label{lm:BatchProductBound}
\end{lemma}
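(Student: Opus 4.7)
The plan is to reduce the statement to showing that the product $\sigma \cdot b_1^*(\sigma) \cdot b_2^*(\sigma)$ is monotonically non-decreasing in $\sigma$. Once that is established, the inequality follows immediately: for $\sigma \leq e$ we get $\sigma \cdot b_1^*(\sigma) b_2^*(\sigma) \leq e \cdot b_1^*(e) b_2^*(e)$, and dividing by $\sigma$ yields $b_1^*(\sigma) b_2^*(\sigma) \leq (e/\sigma) \cdot b_1^*(e) b_2^*(e) \leq \alpha \cdot b_1^*(e) b_2^*(e)$, where the last step uses the hypothesis $\sigma \geq e/\alpha$, i.e.\ $e/\sigma \leq \alpha$.

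To obtain the monotonicity of $\sigma \cdot b_1^*(\sigma) b_2^*(\sigma)$, I would first invoke Theorem~\ref{th:MaxSizeMinCost}, which tells us that the optimum saturates the token budget: $b_1^*(\sigma) s_1 + b_2^*(\sigma) s_2 + b_1^*(\sigma) b_2^*(\sigma) \sigma s_3 = t$. Rearranging, $\sigma s_3 \cdot b_1^*(\sigma) b_2^*(\sigma) = t - (b_1^*(\sigma) s_1 + b_2^*(\sigma) s_2)$, so the desired monotonicity in $\sigma$ of the left-hand side reduces to anti-monotonicity in $\sigma$ of the ``input footprint'' $b_1^*(\sigma) s_1 + b_2^*(\sigma) s_2$.

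For this anti-monotonicity, I would exploit the symmetry of the cost function $c(b_1,b_2)$ under the exchange $(b_1,s_1,r_1) \leftrightarrow (b_2,s_2,r_2)$: re-running the derivation of Theorem~\ref{th:GlobalMinimum} on the swapped problem yields the closed form $b_2^*(\sigma) = (-s_1 s_2 + \sqrt{s_1^2 s_2^2 + s_1 s_2 s_3 \sigma t})/(s_2 s_3 \sigma)$, from which the identity $b_1^*(\sigma) s_1 = b_2^*(\sigma) s_2$ falls out directly. Applying the rationalization already used in the proof of Lemma~\ref{lm:b1antimonotone} shows that this common value equals $s_1 s_2 t / (s_1 s_2 + \sqrt{s_1^2 s_2^2 + s_1 s_2 s_3 \sigma t})$; its denominator is strictly monotone in $\sigma$, so both $b_1^*(\sigma) s_1$ and $b_2^*(\sigma) s_2$ are anti-monotone, and hence so is their sum.

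The main obstacle I anticipate is justifying the identity $b_1^*(\sigma) s_1 = b_2^*(\sigma) s_2$ cleanly. The symmetry argument makes it natural, but one still has to ensure that both derivations (eliminating $b_2$ via Lemma~\ref{lm:B2ofB1} and optimizing over $b_1$, versus eliminating $b_1$ and optimizing over $b_2$) point to the same unique positive minimum of $c$; the uniqueness established in Theorem~\ref{th:GlobalMinimum} supplies this, but in the write-up it may be cleaner to verify the identity directly by substituting the closed form of $b_1^*(\sigma)$ into $b_2(b_1)$ from Lemma~\ref{lm:B2ofB1} and simplifying, thereby sidestepping any appeal to symmetry.
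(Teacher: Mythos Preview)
Your argument is correct but follows a genuinely different route from the paper. The paper argues by contradiction: assuming $b_1^*(\sigma)b_2^*(\sigma)>\alpha\,b_1^*(e)b_2^*(e)$, it combines Lemma~\ref{lm:B1UpperBound} (to force $b_2^*(\sigma)>b_2^*(e)$), anti-monotonicity of $b_1^*$ (Lemma~\ref{lm:b1antimonotone}), and $\alpha\sigma\geq e$ to push the token count at selectivity $\sigma$ strictly above $t$, contradicting saturation. Your approach instead proves directly that $\sigma\,b_1^*(\sigma)b_2^*(\sigma)$ is monotone, by first establishing the structural identity $b_1^*(\sigma)s_1=b_2^*(\sigma)s_2$ and then reading off anti-monotonicity of the input footprint from the rationalized closed form. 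This avoids Lemma~\ref{lm:B1UpperBound} entirely and yields the pleasant side fact that, at the optimum, the token budget for the two inputs is split evenly. Your self-identified ``obstacle'' is not a real difficulty: substituting the closed form for $b_1^*$ into $b_2(b_1)$ and clearing denominators, the equation $b_2^*s_2=b_1^*s_1$ reduces exactly to the defining quadratic $b_1^2 s_1 s_3\sigma+2b_1 s_1 s_2-s_2 t=0$ from Lemma~\ref{lm:CriticalPoint}, so the direct check is immediate. The paper's proof is shorter because it reuses existing lemmas verbatim; yours is more self-contained and exposes additional structure.
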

\begin{proof}
    The proof uses contradiction. Assume that $b_1^*(\sigma)\cdot b_2^*(\sigma)>\alpha\cdot b_1^*(e)\cdot b_2^*(e)$. According to Lemma~\ref{lm:B1UpperBound}, it is $b_1^*(\sigma)\leq\alpha\cdot b_1^*(e)$. Therefore, $b_1^*(\sigma)\cdot b_2^*(\sigma)>\alpha\cdot b_1^*(e)\cdot b_2^*(e)$ implies $b_2^*(\sigma)>b_2^*(e)$. Due to Theorem~\ref{th:MaxSizeMinCost}, assuming selectivity $e$, the optimal values for $b_1$ and $b_2$ exploit the full number of tokens:
    \begin{equation*}
        b_1^*(e)\cdot s_1+b_2^*(e)\cdot s_2+b_1^*(e)\cdot b_2^*(e)\cdot s_3\cdot e=t
    \end{equation*}
    However, exploiting $b_2^*(\sigma)>b_2^*(e)$ and $b_1^*(\sigma)\cdot b_2^*(\sigma)>\alpha\cdot b_1^*(e)\cdot b_2^*(e)$, then anti-monotonicity of $b_1^*(\sigma)$, according to Lemma~\ref{lm:b1antimonotone}, with $e\geq\sigma\geq e/\alpha$, and, finally, $\alpha>1$, yields:
    \begin{align*}
        &b_1^*(\sigma)\cdot s_1+b_2^*(\sigma)\cdot s_2+b_1^*(\sigma)\cdot b_2^*(\sigma)\cdot s_3\cdot\sigma \\
        >&b_1^*(\sigma)\cdot s_1+b_2^*(e)\cdot s_2+\alpha\cdot b_1^*(e)\cdot b_2^*(e)\cdot s_3\cdot\sigma \\
        \geq&b_1^*(e)\cdot s_1+b_2^*(e)\cdot s_2+ b_1^*(e)\cdot b_2^*(e)\cdot s_3\cdot e=t
    \end{align*}
    This leads to a contradiction since the number of tokens used with selectivity $\sigma$ exceeds the number $t$ of available tokens.
\end{proof}

Denote by $o(e,\sigma)$ the join processing costs when optimizing for selectivity estimate $e$ while the actual selectivity is $\sigma$. 


\begin{theorem}
    If $e\geq\sigma\geq e/\alpha$ then $o(e,\sigma)\leq\alpha\cdot g\cdot o(\sigma,\sigma)$.\label{th:LowEstimateBoundedCost}
\end{theorem}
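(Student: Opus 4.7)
My plan is to decompose $o(e,\sigma)$ into the product of the number of LLM invocations and the expected cost per invocation, and to bound each factor separately. Let $N(e) = r_1 r_2/(b_1^*(e)\, b_2^*(e))$ denote the invocation count, which depends only on the batch sizes chosen (i.e., on $e$), and let $C(e,\sigma) = p + b_1^*(e)\, s_1 + b_2^*(e)\, s_2 + b_1^*(e)\, b_2^*(e)\, s_3\, \sigma\, g$ denote the expected per-invocation cost when the true selectivity is $\sigma$. Then $o(e,\sigma) = N(e)\cdot C(e,\sigma)$, and as a special case $o(\sigma,\sigma) = N(\sigma)\cdot C(\sigma,\sigma)$. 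The invocation-count ratio is immediate from Lemma~\ref{lm:BatchProductBound}: $N(e)/N(\sigma) = b_1^*(\sigma)\,b_2^*(\sigma)/(b_1^*(e)\,b_2^*(e)) \leq \alpha$.

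The bulk of the argument goes into showing $C(e,\sigma)/C(\sigma,\sigma) \leq g$. I would first invoke Theorem~\ref{th:MaxSizeMinCost}, which forces the token budget to be saturated at any optimum, so that $b_1^*(e)\,s_1 + b_2^*(e)\,s_2 = t - b_1^*(e)\,b_2^*(e)\,s_3\,e$ and the analogous identity holds at $\sigma$. Substituting these into the per-invocation expressions yields the compact forms $C(e,\sigma) = p + t + b_1^*(e)\,b_2^*(e)\,s_3\,(g\sigma - e)$ and $C(\sigma,\sigma) = p + t + (g-1)\,b_1^*(\sigma)\,b_2^*(\sigma)\,s_3\,\sigma$. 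Because $\sigma \leq e$ I have $g\sigma - e \leq (g-1)e$, and because $b_1^*(e)\,b_2^*(e)\,s_3\,e \leq t$ by the saturated-budget identity at $e$, it follows that $C(e,\sigma) \leq p + gt$. Meanwhile $C(\sigma,\sigma) \geq p + t$ since $g \geq 1$ makes the extra term non-negative, so $C(e,\sigma)/C(\sigma,\sigma) \leq (p+gt)/(p+t) \leq g$, where the last inequality again reduces to $g \geq 1$. Multiplying the two factor bounds gives $o(e,\sigma) = N(e)\,C(e,\sigma) \leq \alpha \cdot g \cdot N(\sigma)\,C(\sigma,\sigma) = \alpha g \cdot o(\sigma,\sigma)$.

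I expect the main obstacle to be spotting the compact substitution via the saturated-budget identity. Without it, the cost formulas still have three moving terms each and the input and output contributions do not cleanly separate. Once that rewrite is made, the bound $C(e,\sigma) \leq p + gt \leq g\cdot C(\sigma,\sigma)$ follows from elementary monotonicity, but the rewrite is what masks the potentially awkward fact that $g\sigma - e$ can be negative when $\sigma \ll e$, so a naive term-by-term comparison of $C(e,\sigma)$ and $C(\sigma,\sigma)$ would not directly yield the desired factor of $g$.
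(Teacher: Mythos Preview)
Your proof is correct and follows essentially the same approach as the paper: decompose $o(e,\sigma)$ into the number of invocations times the per-invocation cost, bound the invocation-count ratio by $\alpha$ via Lemma~\ref{lm:BatchProductBound}, and bound the per-invocation cost ratio by $g$. The paper handles the second bound more tersely, simply noting that any optimal batch configuration processes at most $t$ tokens, so per-prompt cost lies between $t$ and $t\cdot g$; your explicit substitution of the saturated-budget identity and careful treatment of the $p$ term make the argument tighter but do not change its structure.
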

\begin{proof}
    Optimizing for an estimated selectivity of $e$, the number of model invocations for optimal batch sizes is $r_1\cdot r_2/(b_1^*(e)\cdot b_2^*(e))$, according to Lemma~\ref{lm:NrPrompts}. According to Lemma~\ref{lm:BatchProductBound}, it is $b_1^*(\sigma)\cdot b_2^*(\sigma)\leq\alpha\cdot b_1^*(e)\cdot b_2^*(e)$. Therefore, the number of model invocations when optimizing for selectivity $e$, rather than actual selectivity $\sigma$, is higher at most by factor $\alpha$: $r_1\cdot r_2/(b_1^*(e)\cdot b_2^*(e))\leq\alpha\cdot r_1\cdot r_2/(b_1^*(\sigma)\cdot b_2^*(\sigma))$. Processing costs are proportional to the number of model invocations and the cost per invocation. According to Theorem~\ref{th:MaxSizeMinCost}, any optimal choice for batch sizes leads to prompts that exploit the maximal number of tokens. The cost per prompt is therefore between $t$ (if all tokens are read) and $t\cdot g$ with $g\geq 1$ (if all tokens are written). Hence, optimizing for estimated selectivity $e$, rather than selectivity $\sigma$, can increase per-invocation costs at most by factor $g$. The postulated bound follows since the number of model invocations increases at most by factor $\alpha$ and the cost per invocation at most by factor $g$.
\end{proof}

The following theorem bounds join processing costs, assuming imprecise selectivity estimates. 


\begin{theorem}
    Given constant tuple sizes and ratios between actual and initial estimated selectivity, Algorithm~\ref{alg:AdaptiveJoin} converges to cost within factor $\alpha\cdot g$ of the optimum as the size of the input data grows.
\end{theorem}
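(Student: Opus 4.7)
The plan is to decompose the total cost of Algorithm~\ref{alg:AdaptiveJoin} into two pieces, bound each separately, and then show that as $r_1\cdot r_2\to\infty$ the ratio to the optimum converges to $\alpha\cdot g$. The two pieces are (i) the cost of the final, successful call to BlockJoin, and (ii) the aggregate cost of all preceding calls that returned \textbf{<Overflow>}. Piece (i) will be handled by a direct appeal to Theorem~\ref{th:LowEstimateBoundedCost}; the real work is controlling piece (ii).

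To bound piece (i) I would first count while-loop iterations. Starting from the optimistic initial estimate $e_0$, after $k$ failures the estimate is $e_0\cdot\alpha^k$, and the loop exits as soon as this reaches or exceeds the true selectivity $\sigma$. The smallest such $k$ is $\lceil\log_\alpha(\sigma/e_0)\rceil$, which under the hypothesis that $\sigma/e_0$ is a constant independent of input size is itself a constant. The estimate $e^\star$ used in the successful iteration then satisfies $\sigma\leq e^\star<\alpha\cdot\sigma$, i.e., $e^\star\geq\sigma\geq e^\star/\alpha$, so Theorem~\ref{th:LowEstimateBoundedCost} yields $o(e^\star,\sigma)\leq\alpha\cdot g\cdot o(\sigma,\sigma)$. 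Under the constant-tuple-size hypothesis the optimal batch sizes $b_1^*(\sigma), b_2^*(\sigma)$ depend only on constants, so $o(\sigma,\sigma)=\Theta(r_1\cdot r_2)$.

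For piece (ii), the key observation is that every failed round uses batch sizes $b_1(e), b_2(e)$ chosen to saturate the token budget at the too-small rate $e$: $b_1(e)\cdot s_1+b_2(e)\cdot s_2+b_1(e)\cdot b_2(e)\cdot e\cdot s_3=t$. Under a uniform-selectivity assumption on the input batches, the first LLM call in that round generates approximately $b_1(e)\cdot b_2(e)\cdot\sigma\cdot s_3>b_1(e)\cdot b_2(e)\cdot e\cdot s_3$ output tokens and therefore triggers overflow immediately. Each failed round thus costs at most $t\cdot g$ (one invocation), and the constant number of such rounds contributes only an $O(1)$ additive overhead to the total cost.

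Combining, total cost is at most $\alpha\cdot g\cdot o(\sigma,\sigma)+O(1)$ against optimum $o(\sigma,\sigma)=\Theta(r_1\cdot r_2)$, so the ratio tends to $\alpha\cdot g$ as $r_1\cdot r_2\to\infty$. The main obstacle is precisely the uniform-selectivity claim used in piece (ii): on adversarial data, early batches could yield few matches and let a failed round run through many non-overflow invocations before an overflow batch is encountered, which would only establish a constant-factor (not exactly $\alpha\cdot g$) asymptotic bound. A fully rigorous version of the theorem therefore either needs a concentration hypothesis on per-batch selectivity or a minor algorithmic tweak that aborts a round after a small fixed number of non-overflow invocations, ensuring $O(1)$ wasted work per failed round independent of input size.
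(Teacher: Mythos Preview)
Your proof is correct and follows essentially the same approach as the paper: a constant number of overflow rounds (by the fixed ratio $\sigma/e_0$), each costing at most $t\cdot g$ for a single LLM invocation, followed by a final successful run bounded via Theorem~\ref{th:LowEstimateBoundedCost}. Your caveat about per-batch selectivity is well taken; the paper's proof makes the same implicit uniformity assumption when it asserts that, under constant tuple sizes, an overly large batch ``immediately results in an overflow.''
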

\begin{proof}
    Assuming constant tuple sizes in both input tables, using batch sizes that are too large immediately results in an overflow (i.e., Algorithm~\ref{alg:BlockJoin} returns \textbf{<Overflow>} after a single invocation of the LLM). This means after $O(\log_\alpha(\sigma/e))$ LLM invocations, the selectivity estimate $e$ has been adapted to be at least as large as the actual selectivity $\sigma$. As $e$ and $\sigma$ are assumed constant and the maximal overhead per LLM invocation is bounded by constants as well ($t\cdot g$), the overheads due to incorrect selectivity estimates are constant as well. As the data size grows, the overheads of join processing with an estimate $e\geq\sigma$ become dominant. Also, since Algorithm~\ref{alg:AdaptiveJoin} updates estimates via multiplication by factor $\alpha$, it is $\sigma\geq e/\alpha$. According to Theorem~\ref{th:LowEstimateBoundedCost}, the cost overhead is therefore upper-bounded by factor $\alpha\cdot g$.
\end{proof}



\section{Experimental Results}
\label{sec:experiments}

The following experiments evaluate the join operators. Section~\ref{sub:setup} describes the experimental setup. Section~\ref{sub:simulation} reports on the results of simulated joins, showing how costs of different operator implementations scale as a function of the input size. Section~\ref{sub:real} reports on the results of an evaluation that uses OpenAI's GPT-4 model and compares the approaches proposed in this paper to multiple baselines.

\subsection{Experimental Setup}
\label{sub:setup}


The following experiments use a simulator as well as experiments with real LLMs. The simulator is implemented in Python~3.11. It goes beyond applying the formulas, presented in the previous sections, and simulates each single prompt instead. Unless noted otherwise, the simulation assumes a maximal context size of 8,192 tokens, a join predicate selectivity of $\sigma=0.001$, input tuple sizes of 30 tokens (i.e., $s_1=s_2=30$, this corresponds to a few sentences of text), two tokens per output tuple (i.e., $s_3=2$), and a tuple-independent prompt size of 50 tokens (i.e., $p=50$). To translate token counts into processing fees, it uses the pricing of the GPT-4 default model by OpenAI. At the time of writing, the default version charges 3 cents per 1,000 tokens read and 6 cents per 1,000 tokens generated (i.e., the relative cost of writing tokens, $g$, is two). By default, each table contains $r_1=r_2=5,000$ tuples (some experiments use larger tables, this is pointed out in the text). It is $\alpha=4$ for the adaptive join.

Beyond simulation, the experiments use OpenAI's GPT-4 model (gpt-4-0613). Join operators are implemented in Python~3.11, using OpenAI's Python client in version~1.12. GPT-4 is invoked with a per-request timeout of 20 seconds. The temperature parameter of GPT-4 is set to zero, thereby minimizing randomness in output generation. For the block join, the ``Finished'' token, marking the end of a complete join result, is used in the stopping condition for output generation (parameter ``stop''). Unless noted otherwise, GPT-4 is used with a maximal context size of 2,000 tokens. The experiments also evaluate a baseline algorithm (``embedding join''), using OpenAI's text-embedding-3-small model to calculate embedding vectors for each of the tuples in the input tables. Then, each tuple is matched to the tuple with the most similar embedding vector from the other table (based on cosine similarity). Furthermore, the experiments evaluate LOTUS~1.1.4~\cite{Patel2025}, using the default implementation of the semantic join operator. All experiments are executed on an Apple~M1 MacBook Air laptop with 16~GB of RAM, using macOS Sonoma 14.2.1.


The experiments consider three scenarios, connected to the use cases discussed in the introduction. The project code repository\footnote{\url{https://github.com/itrummer/llmjoins}} contains data generation scripts for all of the following benchmarks. The ``Emails'' scenario, loosely based on the investigation surrounding the Enron scandal, uses language models to find inconsistencies between statements made by defendants and the content of email messages, exchanged by them and their co-workers. It joins one table containing statements of the form ``[Name]: I first heard about the losses in February 2022'' with a larger table containing short emails of the form ``I first told [Name] about the losses [TimeFrame]''. Here, [Name] is one of ten common names and [TimeFrame] is a specification of a time frame that either complies, or contradicts the statement by the corresponding defendant. The scenario uses the join condition ``the two texts contradict each other.'' The second scenario (``Reviews'') is based on the IMDB movie reviews, available for instance on Kaggle\footnote{\url{https://www.kaggle.com/datasets/atulanandjha/imdb-50k-movie-reviews-test-your-bert}}. The goal is to match reviews with similar underlying sentiment (the data set comes with ground truth labels, labeling reviews as either positive or negative). As a part of the review is typically sufficient to assess the underlying sentiment, longer reviews were shortened to the first 100 tokens. The join matches the first 50 reviews with the second 50 reviews, using the join condition ``both reviews are positive or both are negative.'' The third scenario, ``Ads,'' uses language models to match ads with corresponding searches, assuming that users enter their ads and requests via free text (e.g., on a platform like Craigslist). Ads are generated from the text template ``Offering table that is [Material] and [Color]'' and searches are generated from the template ``Searching table that is [Material] and [Color]''. Here, [Material] represents a specification of the material (e.g., ``made of wood'') and [Color] a specification of the color (e.g., ``blue''). 

\subsection{Simulation Results}
\label{sub:simulation}



\begin{figure}
    \centering
    \begin{tikzpicture}
        \begin{groupplot}[group style={group size=1 by 3, vertical sep=30pt}, width=8cm, height=3cm, ymajorgrids, ylabel={Cost (USD)}, ylabel near ticks]
            \nextgroupplot[ymode=log, legend to name={leg:scalability}, legend entries={Tuple, Block-C, Block-I, Adaptive}, legend columns=4, try min ticks log=4, xlabel={Number of Tuples ($r_1$)}]
            \addplot table[x=r1, y expr=\thisrow{tuple_cost}*0.03/1000, col sep=comma] {results/scale_table_1.csv};
            \addplot table[x=r1, y expr=\thisrow{conservative_block_cost}*0.03/1000, col sep=comma] {results/scale_table_1.csv};
            \addplot table[x=r1, y expr=\thisrow{informed_block_cost}*0.03/1000, col sep=comma] {results/scale_table_1.csv};
            \addplot table[x=r1, y expr=\thisrow{adaptive_cost}*0.03/1000, col sep=comma] {results/scale_table_1.csv};

            \nextgroupplot[ymode=log, legend to name={leg:scalability}, legend entries={Tuple, Block-C, Block-I, Adaptive}, legend columns=4, try min ticks log=4, xlabel={Tokens per Tuple ($s_1$)}]
            \addplot table[x=s1, y expr=\thisrow{tuple_cost}*0.03/1000, col sep=comma] {results/scale_tuple_1.csv};
            \addplot table[x=s1, y expr=\thisrow{conservative_block_cost}*0.03/1000, col sep=comma] {results/scale_tuple_1.csv};
            \addplot table[x=s1, y expr=\thisrow{informed_block_cost}*0.03/1000, col sep=comma] {results/scale_tuple_1.csv};
            \addplot table[x=s1, y expr=\thisrow{adaptive_cost}*0.03/1000, col sep=comma] {results/scale_tuple_1.csv};

            \nextgroupplot[ymode=log, legend to name={leg:scalability}, legend entries={Tuple, Block-C, Block-I, Adaptive}, legend columns=4, try min ticks log=4, xlabel={Selectivity of Join Predicate ($\sigma$)}]
            \addplot table[x=sigma, y expr=\thisrow{tuple_cost}*0.03/1000, col sep=comma] {results/scale_output_1.csv};
            \addplot table[x=sigma, y expr=\thisrow{conservative_block_cost}*0.03/1000, col sep=comma] {results/scale_output_1.csv};
            \addplot table[x=sigma, y expr=\thisrow{informed_block_cost}*0.03/1000, col sep=comma] {results/scale_output_1.csv};
            \addplot table[x=sigma, y expr=\thisrow{adaptive_cost}*0.03/1000, col sep=comma] {results/scale_output_1.csv};
        \end{groupplot}
    \end{tikzpicture}

    \ref{leg:scalability}
    \caption{Cost of simulated joins with GPT-4.}
    \label{fig:simulation}
\end{figure}
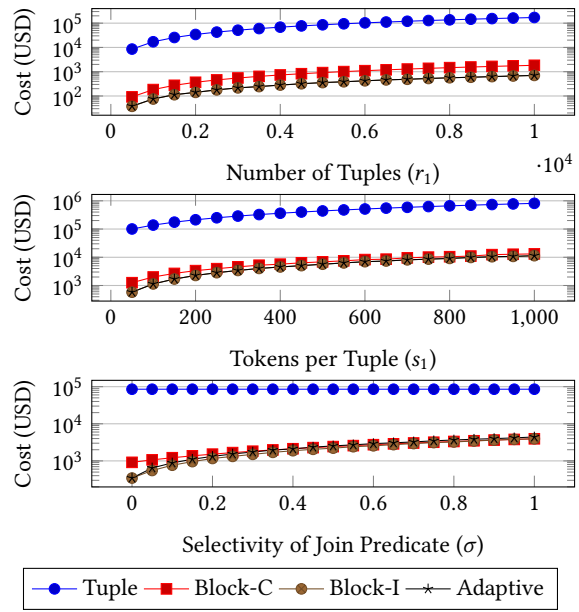

Figure~\ref{fig:simulation} compares processing costs of different join operator implementations, varying the size of the first input table, the size of the tuples ($s_1$), as well as the selectivity of the join predicate ($\sigma$). It compares the tuple join (Algorithm~\ref{alg:TupleJoin}), the block join (Algorithm~\ref{alg:BlockJoin}) when calculating batch sizes for a conservative selectivity estimate of one (which ensures enough space for result output), abbreviated as ``Block-C'', and the same algorithm when calculating batch sizes informed by the actual selectivity, abbreviated as ``Block-I''. Finally, it reports results for the adaptive join algorithm (Algorithm~\ref{alg:AdaptiveJoin}), using an optimistic selectivity estimate of $\sigma/100$ for each benchmark (i.e., initially underestimating selectivity by factor 100). The y-axis of Figure~\ref{fig:simulation} is logarithmic.

The costs of the tuple join are higher than the costs of the other join operators by several orders of magnitude. E.g., joining tables containing 10,000 and 5,000 tuples costs over 100,000 dollars when using the tuple join but less than 1,000 dollars for the Adaptive join. Among the other join operators, the block join with conservative selectivity estimates (Block-C) performs worse than the one with accurate selectivity estimates (Block-I). For instance, for an input size of 10,000 tuples, Block-C is about three times more expensive than Block-I. Block-I is difficult to implement as it requires precise selectivity estimates (which would require additional profiling mechanisms that incur additional costs). However, the adaptive algorithm performs almost identical to Block-I (e.g., cost within 0.1\% of Block-I for 10,000 input tuples) and does not require accurate selectivity estimates, making it the most practical alternative.

Increasing the number of input rows, tuple size, or join selectivity increases processing overheads for almost all operators. An exception is the tuple join for which costs do not increase when increasing join selectivity. This is expected as, unlike for the block join variants, the tuple join generates the same amount of output for matching tuple pairs as for non-matching tuple pairs. The gap between different block join variants (i.e., Block-C, Block-I, and also Adaptive) varies as a function of scenario properties. As selectivity increases, the (pessimistic) assumptions on the selectivity, underlying tuning choices made by Block-C, become accurate. Hence, the gap between block join variants shrinks as selectivity increases. 



\subsection{Benchmarks with Real LLMs}
\label{sub:real}

\begin{table}[t]
    \centering
    \caption{Benchmark statistics.}
    \begin{tabular}{llll}
    \toprule[1pt]
    \textbf{Property} & \textbf{Emails} & \textbf{Reviews} & \textbf{Ads} \\
    \midrule[1pt]
    \textbf{Tbl 1 Rows} &  100 & 50 & 16 \\
    \textbf{Tbl 2 Rows} & 10 & 50 & 16 \\
    \textbf{Tbl 1 Avg.\ Tuple Size} & 14 & 98 & 11 \\
    \textbf{Tbl 2 Avg.\ Tuple Size} & 15 & 101 & 10 \\
    \textbf{Predicate Selectivity} & 0.01 & 0.5 & 0.06 \\
    \bottomrule[1pt]
    \end{tabular}
    \label{tab:benchmarks}
\end{table}

\begin{figure}
    \centering
    \begin{tikzpicture}
        \begin{groupplot}[group style={group size=1 by 4, x descriptions at=edge bottom, vertical sep=10pt}, ybar=0pt, legend entries={Tuple-J, Block-J, Adaptive-J, Embedding-J, LOTUS}, width=8cm, height=3.25cm, xmin=0.5, xmax=3.5, ymajorgrids, every node near coord/.append style={rotate=0, anchor=south}, legend pos=outer north east, legend to name=legendRealCost, legend columns=5, xticklabels={Emails,Reviews,Ads}, xtick={1,2,3}, ylabel near ticks, ymode=log, point meta=rawy, try min ticks log=4]
            \nextgroupplot[ylabel={Fees (cents)}, ymax=20000]
            \addplot coordinates {(1,194) (2,1780) (3,43)};
            \addplot coordinates {(1,9) (2,248) (3,2)};
            \addplot coordinates {(1,7) (2,255) (3,2)};
            \addplot coordinates {(1,0.003) (2,0.02) (3,0.000007)};
            \addplot coordinates {(1,399) (2,2290) (3,96)};
            \nextgroupplot[ylabel={Tokens Read}, node near coord style={rotate=90, anchor=east}, ymin=10]
            \addplot coordinates {(1,62700) (2,588250) (3,13824)};
            \addplot coordinates {(1,2825) (2,75271) (3,466)};
            \addplot coordinates {(1,2066) (2,77192) (3,466)};
            \addplot coordinates {(1,1514) (2,9966) (3,336)};
            \addplot coordinates {(1, 127000) (2, 748300) (3, 30464)};
            \nextgroupplot[ylabel={Tokens Written}, node near coord style={rotate=90, anchor=east}, ymin=10]
            \addplot coordinates {(1,100) (2,2500) (3,256)};
            \addplot coordinates {(1,90) (2,3735) (3,80)};
            \addplot coordinates {(1,65) (2,3826) (3,80)};
            \addplot coordinates {(1,0) (2,0) (3,0)};
            \addplot coordinates {(1,3000) (2,7527) (3,768)};
            \nextgroupplot[ylabel={Time (s)}, ymax=10000]
            \addplot coordinates {(1,435) (2,1192) (3,117)};
            \addplot coordinates {(1,6) (2,160) (3,4)};
            \addplot coordinates {(1,3) (2,166) (3,5)};
            \addplot coordinates {(1,24) (2,14) (3,4)};
            \addplot coordinates {(1,13) (2,31) (3,6)};
        \end{groupplot}
    \end{tikzpicture}

    \ref{legendRealCost}
    \caption{Cost of different join operators.}
    \label{fig:RealCost}
\end{figure}
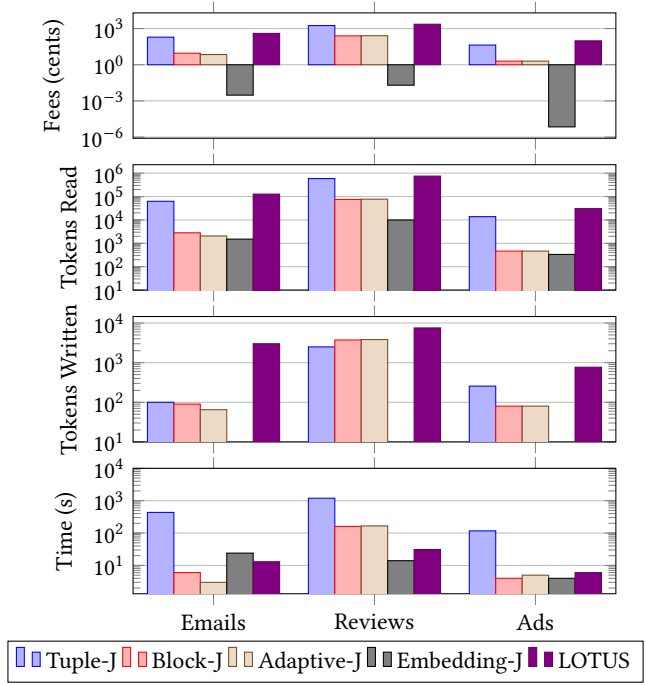


Table~\ref{tab:benchmarks} reports statistics on the benchmarks, used for the experiments in this section. Figure~\ref{fig:RealCost} reports the cost of different join operators incurred in (non-simulated) experiments with GPT-4. As in the simulation, the execution costs for the tuple join are higher than the costs for the block join variants by orders of magnitude. Due to relatively small data sizes, the gap between the adaptive join and the block nested loops join tuned using conservative selectivity estimates (i.e., $\sigma=1$) is smaller. The adaptive join is up to 30\% cheaper than the block join, while it only incurs overheads of less than 3\% in one scenario (``Reviews''). The latter scenario features the join predicate with the highest selectivity, meaning that the conservative assumptions on selectivity made by the non-adaptive block join are (almost) correct. 

The cost differences between tuple and block joins are primarily due to a large gap in terms of the number of tokens read. The number of written tokens is distributed more evenly. In the Reviews scenario, the tuple join even produces fewer tuples than the other join algorithms. This is due to the fact that the block joins require several tokens per result tuple, whereas the tuple join produces one token for each pair of tuples. As the selectivity of the join predicate is high in the ``Reviews'' scenario, the tuple join gains a slight advantage in terms of the number of generated tokens. 

Similar to processing fees, switching to the block join algorithms reduces execution time. For instance, generating a complete join result in the first scenario (``Emails'') takes 435 seconds when using the tuple join, compared to three seconds with the adaptive join algorithm. The embedding join incurs significantly lower costs than the other operators. This is due to the use of a cheaper model, generating embeddings, and to the low number of tokens read. The embedding join reads all input data only once and does not generate any output tokens.

LOTUS consumes a similar number of tokens as the tuple nested loops join algorithm. Therefore, execution costs are comparable as well and significantly higher than for the block-based join algorithms. On the other hand, LOTUS is significantly faster than the tuple nested loops algorithm. Compared to the adaptive join algorithm, LOTUS is faster in one scenario (166 versus 31 seconds), while achieving comparable execution time in another (six versus five seconds), and increasing execution time for the Emails scenario (three versus 13 seconds). Clearly, the relative performance in terms of execution time is not aligned with the relative performance in terms of the number of tokens processed. This can be explained by the fact that LOTUS parallelizes LLM invocations, whereas the implementation of the join operators proposed in this paper is sequential. While the focus of the proposed implementations is on costs, rather than run time, different blocks of input tuples could be processed in parallel as well.

\begin{figure}
    \centering
    \begin{tikzpicture}
        \begin{groupplot}[group style={group size=1 by 3, x descriptions at=edge bottom, vertical sep=10pt}, ybar=0pt, legend entries={Tuple-J, Block-J, Adaptive-J, Embedding-J, LOTUS}, width=8cm, height=3cm, xmin=0.5, xmax=3.5, ymajorgrids, every node near coord/.append style={rotate=90, anchor=east}, ymin=0, legend pos=outer north east, legend to name=legend:RealQuality, legend columns=5, xticklabels={Emails,Reviews,Ads}, xtick={1,2,3}, ylabel near ticks]
            \nextgroupplot[ylabel=Recall]
            \addplot coordinates {(1,1) (2,0.52) (3,1)};
            \addplot coordinates {(1,0.9) (2,0.46) (3,0.94)};
            \addplot coordinates {(1,1) (2,0.48) (3,0.94)};
            \addplot coordinates {(1,0) (2,0.03) (3,1)};
            \addplot coordinates {(1,1) (2,0.60) (3,1)};
            \nextgroupplot[ylabel=Precision]
            \addplot coordinates {(1,0.38) (2,0.91) (3,1)};
            \addplot coordinates {(1,0.53) (2,0.80) (3,1)};
            \addplot coordinates {(1,1) (2,0.82) (3,1)};
            \addplot coordinates {(1,0) (2,0.8) (3,1)};
            \addplot coordinates {(1,0.13) (2,0.91) (3,1)};
            \nextgroupplot[ylabel={F1 Score}]
            \addplot coordinates {(1,0.56) (2,0.66) (3,1)};
            \addplot coordinates {(1,0.67) (2,0.59) (3,0.97)};
            \addplot coordinates {(1,1) (2,0.60) (3,0.97)};
            \addplot coordinates {(1,0) (2,0.06) (3,1)};
            \addplot coordinates {(1,0.23) (2,0.72) (3,1)};
        \end{groupplot}
    \end{tikzpicture}

    \ref{legend:RealQuality}
    \caption{Output quality of different join operators.}
    \label{fig:RealQuality}
\end{figure}
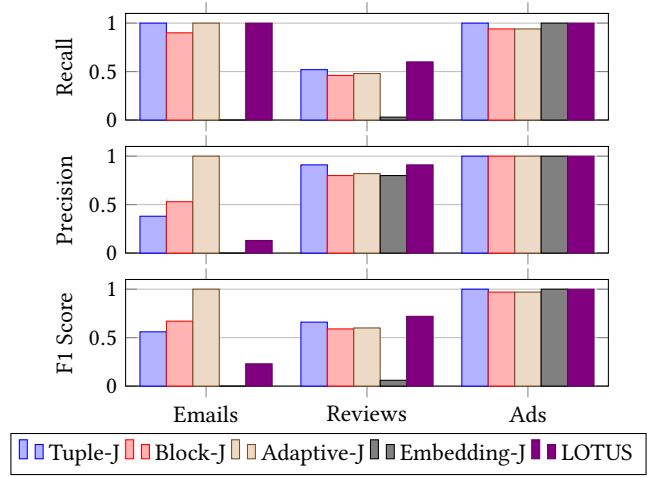

Figure~\ref{fig:RealQuality} reports on the accuracy of different join operators. Specifically, it reports recall, precision, and the F1 score, measured by comparing the result tuples generated by different join operators to the ground truth result. In two of the three scenarios, using block joins, rather than the tuple join, leads to a slight degradation of F1 scores. However, in the first scenario, using the adaptive join over the tuple join almost doubles the F1 score. It seems that GPT-4 is able to identify pairs of contradicting statements better when seeing a larger sample of all available statements. This shows that, despite reducing costs and time by orders of magnitude, using block joins over the simple tuple join does not degrade result quality in general.

The embedding join has an F1 score of zero (with \emph{both}, precision and recall, evaluating to zero) for the Email benchmark and an F1 score of 0.06 for the reviews. On the other hand, it has a perfect F1 score of one for Ads. This can be explained by the properties of the join predicates. For Ads, the goal is to find matching ads and searches. Here, having similar embedding vectors is indeed a good indicator for whether or not two tuples satisfy the join condition. For the Emails benchmark, on the other hand, the goal is to find contradicting statements. Such statements likely have dissimilar embedding vectors. 

LOTUS achieves an optimal F1 Score in two out of the three scenarios. Interestingly, the F1 Score for the Emails scenario is significantly below the block-based join algorithms. This correlates with the result quality of the embedding-based join algorithm. LOTUS uses embeddings to speed up joins as well. Hence, this scenario, aimed at finding contradicting text, appears to be hard for embedding-based methods in general.

\section{Related Work}
\label{sec:Related}

This work relates most to several recently proposed systems for semantic query processing~\cite{Patel2025, Madden2024, Liu2025b, Urban2024, Jo2024a, Jo2023a}, enabling users to formulate queries that go beyond the capabilities of pure SQL. Many of those systems support variants of semantic join operators. For instance, Section~\ref{sec:experiments} compares the proposed join operator implementations to the one used in the LOTUS system. The block-based join operator implementations described in this paper could be integrated into those systems as well. By its focus on implementing semantic versions of relational operators efficiently, this work relates to another recent paper~\cite{Shao2025}. In contrast to joins, the aforementioned paper focuses on efficient implementations of semantic sort operators.

Join algorithms have been the focus of intensive research in the database community for many decades~\cite{Shapiro1986}. The join operators proposed in this paper are variants of nested loop joins, the most popular join operator for theta joins in general. However, the focus on language models implies several unique constraints, influencing not only the operator implementations but also the associated cost models and, therefore, the optimal settings for parameters such as tuple batch sizes. First, using simple, traditional cost models (based on the number of pages read and written), nested loop join variants require only one single output buffer page, independently of the join result size. This means that join selectivity does not influence optimal batch sizes for the input tables. Instead, for language models, the number of output tuples influences the number of tokens available for reading input. Second, traditional block nested loop join variants assume that we can load additional data into an input buffer while maintaining the content of other input buffers at no additional costs. Instead, language models incur costs for reading all relevant input tokens repeatedly, independently of whether the content changed, compared to the last invocation, or not. Therefore, maximizing the size of one input buffer while minimizing the size of the other, a strategy that works best for block nested loops join in a traditional setting, does not maximize performance when executing joins via language models (e.g., this becomes apparent in Figure~\ref{fig:CostFunction}).

This work connects to prior work that exploits language models for data management tasks~\cite{Trummer2022, Narayan2022, Arora, Kayali2023, Saeed2023, Trummer2022b, Chen2023, Thorne2021}. In particular, it connects to prior work leveraging language models for join processing~\cite{Suri2021}. However, prior work focuses on similarity-based joins (i.e., items match if they are more similar) and proposes a task-specific training phase. In contrast to that, the approach presented in this paper supports generic theta joins. The join condition is specified in natural language and may, in fact, connect tuples because they are dissimilar (e.g., matching tuples that represent contradicting statements, a scenario evaluated in Section~\ref{sec:experiments}). Also, unlike prior work requiring a task-specific training phase, the approaches presented in this paper focus on a zero-shot scenario, avoiding the need for task-specific training labels. Different from other recent work~\cite{Saeed2023}, the approaches presented here assume that input data needs to be fed as input to the language model (rather than extracting information contained in the learned weights of the model).

As pointed out in a recent vision paper~\cite{Parameswaran2023}, implementing relational operators with language models connects to prior work leveraging crowdsourcing for data processing~\cite{Marcus2011, Franklin2011, Parameswaran2012, Parameswaran2013}. In particular, it connects to prior work leveraging human crowd workers for joins and related matching tasks~\cite{Whang2013, Wang2013e, Marcus2011a, Marcus2012a, Demartini2012}. However, crowdsourcing adds specific challenges (e.g., the need to aggregate diverging answers from different crowd workers) whereas it removes others (e.g., hard bounds on the combined input and output size for each task), thereby motivating different algorithmic design decisions. Broadly, this work connects to prior approaches, adapting join algorithms to new processing contexts, e.g., multi-core architectures~\cite{Albutiu2012a, Balkesen2013}, GPUs~\cite{Kaldewey2012, Yuan2013a}, and FPGAs~\cite{Halstead2013}. The approaches presented in this paper target a different platform (namely: language models) with unique properties. 

The work presented here also differs from recent work, leveraging machine learning to speed up traditional, relational processing~\cite{Sabek2021}. Instead, this paper aims to expand the scope of relational processing via language models. The adaptive join algorithm connects to a rich body of work on adaptive query processing~\cite{Avnur2000, Borovica-Gajic2018, Kaftan2018, Trummer2021c}. However, the adaptive algorithm presented here aims at solving specific challenges that arise in the context of language models, in particular, the need to balance the input size with the expected output size.


\section{Conclusion}
\label{sec:Conclusion}

This paper introduces, analyzes, and evaluates multiple variants of a novel implementation of the semantic join operator. Different from implementations used in current semantic query processing engines, this implementation integrates batches of rows into each prompt, thereby reducing the number of LLM invocations. This leads to significant performance advantages compared to prior operator implementations.



\balance
\bibliographystyle{ACM-Reference-Format}
\bibliography{library,library2}

\end{document}